\tikzset{main node/.style={circle,fill=blue!40,draw,minimum size=1.5cm,inner sep=0pt},}
\newtheorem{prop}{Proposition}
\newtheorem{rem}{Remark}
\newtheorem{lem}{Lemma}
\newtheorem{coro}{Corollary}
\newtheorem{theo}{Theorem}
\newcommand*{\prob}{\mathbb{P}}
\newcommand{\vect}[1]{\boldsymbol{#1}}
\DeclareMathOperator\supp{supp}
\title{Jarzyski's equality and Crooks' fluctuation theorem for general Markov chains with application to decision-making systems}
\author{Pedro Hack, Sebastian Gottwald, Daniel A. Braun}
\date{ }
\begin{document}
\maketitle

\begin{abstract}
We define common thermodynamic concepts purely within the framework of general Markov chains and derive Jarzynski’s equality and Crooks’
fluctuation theorem in this setup. In particular, we regard the discrete time case that leads to an asymmetry in the definition of work that appears in the usual formulation of Crooks’ fluctuation
theorem. We show how this asymmetry can be avoided with an additional condition regarding the energy protocol. The general formulation in terms of Markov chains allows transferring the results to other application areas outside of physics. Here, we discuss how this framework can be applied in the context of decision-making. This involves the definition of the relevant quantities, the assumptions that need to be made for the different fluctuation theorems to hold, as well as the consideration
of discrete trajectories instead of the continuous trajectories, which are relevant in physics.

\end{abstract}

\section{Introduction}

Over the last 20 years, several advances in thermodynamics have led to the development of results relating equilibrium quantities to non-equilibrium trajectories. Those advances have crystallized in a new area of research, \emph{non-equilibrium thermodynamics}, where these relations play a major role \cite{seifert2012stochastic,jarzynski2011equalities,jarzynskia2008nonequilibrium}. Among them, two of the most remarkable, are Jarzynski's equality \cite{jarzynski1997equilibrium,jarzynski2004nonequilibrium,jarzynski1997nonequilibrium} and Crooks' fluctuation theorem \cite{crooks1999entropy,crooks1998nonequilibrium}, for which experimental evidence has been reported in several contexts: unfolding and refolding processes involving RNA \cite{collin2005verification,liphardt2002equilibrium}, electronic transitions between electrodes manipulating a charge parameter \cite{saira2012test}, rotation of a macroscopic object inside a fluid surrounded by magnets where the current of a wire attached to the macroscopic object is manipulated \cite{douarche2005experimental}, and a trapped-ion system \cite{an2015experimental,smith2018verification}.

These two results have been derived under several assumptions in the context of non-equilibrium thermodynamics, including both deterministic \cite{jarzynski1997nonequilibrium,jarzynski2004nonequilibrium} and stochastic dynamics \cite{jarzynski1997equilibrium,crooks1998nonequilibrium,crooks1999entropy,crooks2000path}. Moreover, it has been argued that both results can be obtained as a consequence of Bayesian retrodiction in a physical context \cite{Buscemi2021}. Here, we derive both of them using only concepts from the theory of Markov chains. This allows us to both distinguish the mathematical from the physical assumptions underlying them and, thus, to make them available for application in other areas where the framework of thermodynamics may be useful. The distinction between mathematical and physical assumptions will be of particular importance for the definition of work, as we will see, since the usual definition based on physical considerations leads to an asymmetry of the definition in processes that run either forward or backward in time---see Figure~\ref{crooks def} for a simple example. This is relevant, for instance, when analysing trajectories in terms of their work value, if we do not know whether they were recorded in the forward direction or whether they have been generated by playing them backwards. Ideally, we would like to be able to ascribe work values directly to trajectories without any additional information.

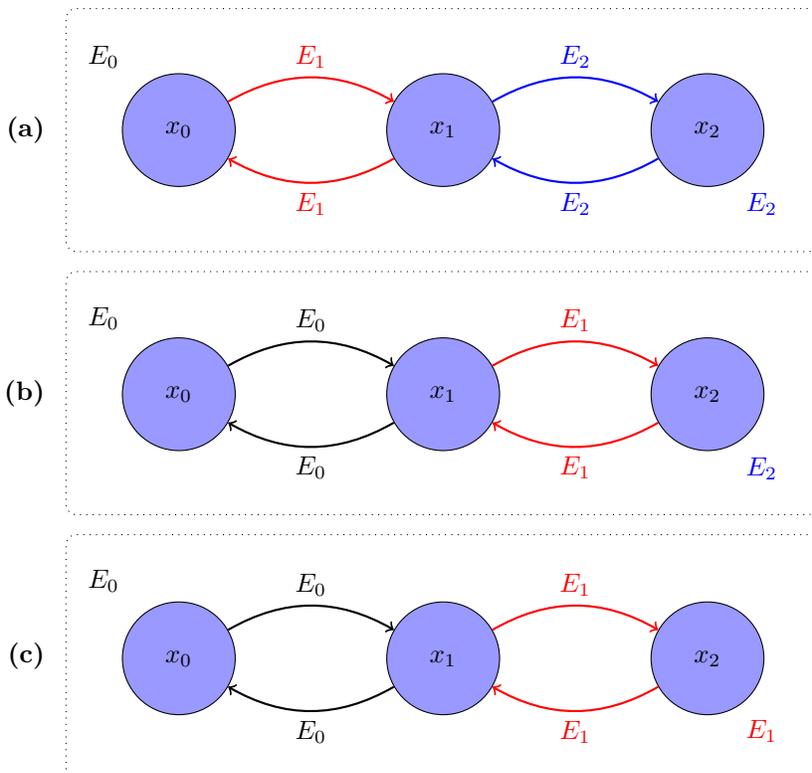
\begin{figure}[!tb]
	\centering
	\begin{tikzpicture}[auto,node distance=2.8cm]
		\node[main node] (1) [xshift=1.5cm] {$x_0$};
		\node[main node] (2) [right = 2cm of 1]  {$x_1$};
		\node[main node] (3) [right = 2cm of 2] {$x_2$};
		\node[main node] (5) [below=2cm of 1]  {$x_0$};
		\node[main node] (6) [right = 2cm of 5]  {$x_1$};
		\node[main node] (7) [right = 2cm of 6]  {$x_2$};
		\node[main node] (8) [below=2cm of 5]  {$x_0$};
		\node[main node] (9) [right = 2cm of 8]  {$x_1$};
		\node[main node] (10) [right = 2cm of 9]  {$x_2$};
		\node[rounded corners, draw, dotted, text height = 3cm,minimum  width=10cm, xshift=5cm, label={[anchor=west,left=1.5mm]180: \textbf{(a)}}]  (main) {};
		\node[rounded corners, draw,dotted, text height = 3cm,minimum  width=10cm, xshift=5cm, yshift=-3.5cm, label={[anchor=west,left=1.5mm]180: \textbf{(b)}}]  (main) {};
		\node[rounded corners, draw,dotted, text height = 3cm,minimum  width=10cm, xshift=5cm, yshift=-7cm, label={[anchor=west,left=1.5mm]180: \textbf{(c)}}]  (main) {};
		
		\node at (0.5,0.98) {$E_0$};
		\node at (0.5,-2.5) {$E_0$};
		\node at (0.5,-6) {$E_0$};
		\node [blue] at (9.25,-0.98) {$E_2$};
		\node [blue] at (9.25,-4.5) {$E_2$};
		\node [red] at (9.25,-8) {$E_1$};
		
		\path[draw,thick,->]
		(1) edge [bend left,red]  node {$E_1$} (2)
		(2) edge [bend left,blue]  node {$E_2$} (3)
		(3) edge [bend left,blue]  node {$E_2$} (2)
		(2) edge [bend left,red]  node {$E_1$} (1)
		(5) edge [bend left]  node {$E_0$} (6)
		(6) edge [bend left,red]  node {$E_1$} (7)
		(6) edge [bend left]  node {$E_0$} (5)
		(7) edge [bend left,red]  node {$E_1$} (6)
		(8) edge [bend left]  node {$E_0$} (9)
		(9) edge [bend left,red]  node {$E_1$} (10)
		(10) edge [bend left,red]  node {$E_1$} (9)
		(9) edge [bend left]  node {$E_0$} (8)
		;
	\end{tikzpicture}
	\caption{Relation between a forward process, its corresponding backward process and the definition of work. We consider a trajectory $\vect{x} =(x_0,x_1,x_2)$ and three energy functions $E_0,E_1,E_2$. The upper (bottom) line of arrows represents the forward (backward) process. A work step $W_i = E_i(x_{i-1})-E_{i-1}(x_{i-1})$ is typically defined as the change in energy due to the external change of the energy function, whereas a heat step $Q_i = E_i(x_i) - E_i(x_{i-1})$ is defined as the change in energy due to internal state changes. \textbf{(a)} Typical relation between the forward and backward processes in physics \cite{crooks1998nonequilibrium}. Work in the forward process would be $W_F = E_1(x_0) - E_0(x_0) + E_2(x_1)-E_1(x_1)$, whereas the backward work under the same definition would be $W_B = E_1(x_1)- E_2(x_1)$. 		
		Instead backward work is usually defined as $W_B=E_1(x_1)- E_2(x_1)+E_0(x_0)-E_1(x_0)$ to fulfil the physical time-reversal symmetry $W_F=-W_B$. \textbf{(b)} Another typical protocol in physics \cite{crooks2000path,crooks1999excursions}. In this case the asymmetry is the other way round, where $E_2$ does not influence the forward process. \textbf{(c)} Symmetric protocol where both forward and backward work follow the same definition with $W_F = E_1(x_1)-E_0(x_0)$ and $W_B = E_0(x_1)-E_1(x_1)= -W_F$. This is the protocol we propose in Section~\ref{sec:crooks}.}
	\label{crooks def}
\end{figure}

One of the application areas where the framework of thermodynamics has recently  been investigated outside the realm of physics is the analysis of simple learning systems \cite{goldt2017stochastic,perunov2016statistical,england2015dissipative,still2012thermodynamics,seifert2012stochastic,ortega2013thermodynamics,parr2020markov,da2021bayesian,GottwaldBraun2020,Boyd2022}, and in particular the problem of decision-making under uncertainty and resource constraints \cite{wolpert2006information,tishby2011information,ortega2011information,ortega2013thermodynamics,genewein2015bounded,wolpert2016free}. The basic analogy follows from the idea that decision-making involves two opposing \emph{forces}: (i) the tendency of the decision-maker towards better options (equivalently, to maximize a function called \emph{utility}) and (ii) the restrictions on this tendency given by the limited information-processing capabilities of the decision-maker, which prevents it from always picking the best option and is usually modelled by a bound on the entropy of the probability distribution that describes the decision-maker's behaviour.Thermodynamic systems are also explained in terms of two opposing forces. The first being the energy, which the system tries to minimize, and the second being entropy, which prevents the minimization of energy to its full extent. Thus, in both cases we formally deal with optimization problems under information constraints and, thus, we can conceptualize both decision-making and thermodynamics in terms of information theory. In particular, we can consider the environment in which a decision is being made or a thermodynamic system is immersed as a source of information  (in the form of either utility or energy) which, due to the noise (modelled by entropy), reaches the decision-making or thermodynamic system with some error. This results in an imperfect response by the system.

The analogy between thermodynamics and decision-making is not restricted to the equilibrium case \cite{wolpert2006information,tishby2011information,ortega2011information,ortega2013thermodynamics,genewein2015bounded,wolpert2016free}, but can be taken further from equilibrium to non-equilibrium systems. In particular, the aforementioned fluctuation theorems of Jarzynski and Crooks have been previously suggested to apply to decision-makers that adapt to changing environments \cite{grau2018non}. In this previous work, hysteresis and adaptation have been investigated in decision-makers, however, based on the physical convention of defining work differently for forward and backward processes. Here we improve on the work there by replacing this convention with a different energy protocol that naturally entails a symmetric definition of work and by weakening the assumptions that are actually needed in order for the fluctuation theorems to hold in the context of general Markov chains. Given the fact that the literature on this topic belongs for the most part to thermodynamics, we adopt the thermodynamic notation here. In particular, we consider energy functions instead of utilities and take Markov chains as starting point. 

Our manuscript is organized as follows. In Section \ref{basics}, we introduce notions of work and other thermodynamic concepts that are inherent to Markov chains, that is, in contrast to the formalism in physics \cite{crooks1998nonequilibrium,crooks1999entropy,crooks2000path}, we start with the assumption of a Markov chain and deduce all other concepts from that without the need to presuppose the existence of an external energy function. We discuss under what conditions these concepts are uniquely specified from the Markov chain. In Section \ref{sec:jarz}, we use this framework to weaken the derivation of Jarzynski's equality (Theorem \ref{Jarzynski's equality}) in the context of decision-making that was presented in \cite{grau2018non}. In Section \ref{sec:crooks}, we prove Crooks' fluctuation theorem (Theorem \ref{crooks fluctuation thm} and Corollary \ref{physics crooks}) within the same setup. 
In particular, we use an additional assumption which is not mandatory in Crooks' work \cite{crooks1998nonequilibrium,crooks2000path,crooks1999entropy}, but here is needed given the inherent nature of
our definition of work.
In fact, we provide an example in which the new requirement is violated and, as a consequence, Crooks' theorem is false everywhere  (Proposition \ref{counterexample}). In Section~\ref{sec:decision} we discuss how the concepts we have developed can be applied to decision-making systems.


Notice, for simplicity, we develop the results for discrete-time Markov chains with finite state spaces. However, the ideas can be translated, for example, to continuous state spaces by assuming for densities of Markov kernels the properties we use here for transition matrices. We include a few more details regarding this scenario in the discussion (Section \ref{sec:decision}), where we also briefly address the case of continuous-time Markov chains.

\section{Thermodynamics for Markov chains}
\label{basics}

\subsection{Definitions of energy, heat, and work}

In this section, we assume there is some stochastic process which can be modelled as a Markov chain and discuss the definition of energy, partition function, free energy, work, heat and dissipated work in such a context. We follow the terminology in \cite{levin2017markov}.

We call a finite number of random variables $\vect{X}=(X_n)_{n=0}^N$ over a finite state space $S$ a \emph{Markov chain} if we have for any $0<n \leq N$, that
\begin{equation*}
\prob(X_{n}=x_n |X_0 = x_0 ,\dots,X_{n-1} =x_{n-1}) \coloneqq \prob(X_{n}=x_n | X_{n-1} =x_{n-1})
\end{equation*}
for all $(x_0,x_1,\dots,x_n) \in S^{n+1}$.
Notice we can characterize $\vect{X}$ by a probability distribution $p_0$, where $p_0(x)=\prob(X_0=x)$, and $N$ \textit{transition matrices} $(M_n)_{n=1}^N$ given by 
\begin{equation*}
    (M_n)_{xy}\coloneqq \prob(X_n=x|X_{n-1}=y)
\end{equation*}
for all $x,y \in S$ and $1 \leq n \leq N$.
Notice, any transition matrix $M_n$ is a \emph{stochastic matrix}, that is, we have $\sum_{x \in S} (M_n)_{xy}=1$ for all $y \in S$. 

If $p$ is a distribution, $M_n$ a transition matrix of $\vect{X}$ for some fixed    $n$, and we have $M_n p=p$, where $(M_n p)(x)\coloneqq \sum_{y \in S} (M_n)_{xy}p(y)$, then we say $p$ is a \emph{stationary distribution} of $M_n$. Note that, this terminology applies to a \emph{single} $M_n$, that is, a stationary distribution $p$ of $M_n$ is stationary with respect to the (homogeneous) Markovian dynamics of that fixed transition matrix, and generally not with respect to the (inhomogeneous) Markovian dynamics of $\vect{X}$. If $p$ fulfills
\begin{equation}
\label{det balance}
(M_n)_{yx} p(x) = (M_n)_{xy}  p(y) \qquad \forall x,y \in S, 
\end{equation}
then we say $p$ satisfies \emph{detailed balance} with respect to $M_n$. Note that such a $p$ is a stationary distribution of $M_n$. In case $M_n$ has a unique stationary distribution $p$ which satisfies detailed balance with respect to $M_n$, then we may simply say $M_n$ satisfies detailed balance. We say a transition matrix $M_n$ is \emph{irreducible} if for any pair $x,y \in S$ there exists an integer $m \geq 1$ such that 
$(M_n^m)_{xy}>0$.
Irreducible transition matrices have a useful property which we present in Lemma \ref{irred prop} (see \cite[Corollary 1.17 and Proposition 1.19]{levin2017markov} for a proof).

\begin{lem}
\label{irred prop}
If a transition matrix is irreducible, then it has a unique stationary distribution. Furthermore, the stationary distribution has non-zero entries.
\end{lem}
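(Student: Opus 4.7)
The plan is three-fold: first establish existence of at least one stationary distribution of $M$ (I drop the subscript $n$ since only a single transition matrix is involved), then use irreducibility to show that any such distribution has strictly positive entries, and finally leverage this positivity to force uniqueness.

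For existence, I would observe that the set $\Delta \coloneqq \{p \in \mathbb{R}^S : p \geq 0,\ \sum_x p(x) = 1\}$ is a nonempty, compact, convex subset of $\mathbb{R}^S$, and the map $p \mapsto Mp$ is continuous and sends $\Delta$ into itself since $M$ is stochastic ($\sum_x (Mp)(x) = \sum_y p(y)\sum_x M_{xy} = \sum_y p(y) = 1$, with nonnegativity clear). Brouwer's fixed-point theorem then yields a $\pi \in \Delta$ with $M\pi = \pi$. Alternatively, a Cesàro-averaging argument on any initial distribution gives the same conclusion without fixed-point machinery and is what is used in \cite{levin2017markov}.

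For positivity, let $\pi$ be any stationary distribution. Since $\sum_x \pi(x) = 1$, at least one state $y$ satisfies $\pi(y) > 0$. For an arbitrary $x \in S$, irreducibility supplies an integer $m \geq 1$ with $(M^m)_{xy} > 0$; iterating stationarity gives $\pi = M^m \pi$, so
\[
\pi(x) = \sum_{z \in S} (M^m)_{xz}\, \pi(z) \;\geq\; (M^m)_{xy}\, \pi(y) \;>\; 0,
\]
showing $\pi$ has no zero entries.

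For uniqueness, I would suppose $\pi_1$ and $\pi_2$ are both stationary and set $c \coloneqq \min_{x \in S} \pi_1(x)/\pi_2(x)$, which is a well-defined positive number by the positivity step. Then $\mu \coloneqq \pi_1 - c\pi_2$ is a nonnegative vector with at least one zero coordinate, and by linearity it satisfies $M\mu = \mu$. If $\mu$ were nonzero, normalizing it would produce a stationary distribution which, by the previous paragraph, must have all entries strictly positive—contradicting that $\mu$ already has a zero coordinate. Hence $\mu = 0$, so $\pi_1 = c\pi_2$, and normalization forces $c = 1$. The only real subtlety is keeping the paper's convention $(Mp)(x) = \sum_y M_{xy}\,p(y)$ straight so that $(M^m)_{xy} > 0$ really does represent mass flowing from $y$ to $x$; once this is settled, no machinery beyond finite-dimensional linear algebra and a fixed-point theorem is required.
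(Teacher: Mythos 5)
Your proof is correct and complete. Note that the paper does not actually supply its own argument here---it defers entirely to \cite[Corollary 1.17 and Proposition 1.19]{levin2017markov}---so what you have written is a self-contained alternative to the cited route. In Levin et al.\ the stationary distribution is constructed probabilistically (as normalized expected occupation times before return to a fixed state), and uniqueness is deduced from the lemma that harmonic functions of an irreducible chain are constant, i.e.\ that $\ker(M-I)$ is one-dimensional. Your route replaces the probabilistic construction with Brouwer (or Ces\`aro averaging) for existence, and replaces the harmonic-function lemma with the min-ratio trick: $\mu = \pi_1 - c\,\pi_2$ with $c = \min_x \pi_1(x)/\pi_2(x)$ is a nonnegative fixed vector with a zero coordinate, which the positivity step forbids unless $\mu = 0$. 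The trade-off is that your argument is purely linear-algebraic and shorter, at the cost of invoking a fixed-point theorem (though, as you note, the Ces\`aro variant avoids even that), whereas the textbook construction yields the extra probabilistic information $\pi(x) = 1/\mathbb{E}_x[\tau_x^+]$ for free. You are also right that the one point requiring care is the paper's column-stochastic convention $\sum_x M_{xy} = 1$ with $(M^m)_{xy}$ denoting transition from $y$ to $x$; your inequality $\pi(x) \geq (M^m)_{xy}\,\pi(y)$ uses irreducibility for exactly the correctly oriented pair.
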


If $\vect{X}$ has initial distribution $p_0$, irreducible transition matrices $(M_n)_{n=1}^N$, and $p_N$ is the unique stationary distribution of $M_N$, then we say the Markov chain $\vect{Y} \coloneqq (Y_n)_{n=0}^N$ with initial distribution $p_N$ and transition matrices  $(M_{N-(n-1)})_{n=1}^N$ is the \emph{time reversal} of $\vect{X}$. Notice there are different notions of time reversal. A discussion can be found in \cite[Section III]{yang2020unified}.

Given a distribution $p$ on $S$ such that $p(x) > 0$ for all $x \in S$ and some $\beta >0$, we can associate to $p$ an \emph{energy function} $E$, that is, a function $E: S \to \mathbb{R}$ such that 
\[
p(x)=\frac{1}{Z}e^{-\beta E(x)} \qquad \forall x \in S \, ,
\]
where $Z \coloneqq \sum_{x \in S} e^{-\beta E(x)}$ is called a \emph{partition function} and $F \coloneqq -\frac{1}{\beta} \log (Z)$ the corresponding \emph{free energy}. Notice, given a distribution $p$ with two energy functions $E$ and $E'$, there is a constant $c \in \mathbb{R}$ such that we have
\begin{equation}
\label{ener shift}
E(x)=E'(x)+c \qquad \forall x \in S 
\end{equation} 
and, accordingly,
\begin{equation}
\label{free energy shift}
F = F'+c
\end{equation}
where $F$ and $F'$ are the free energies for $p$ using $E$ and $E'$, respectively. Thus, each distribution where all entries are strictly positive has, up to a constant, a unique energy function associated to it. 

For the remainder of this section, let $\vect{X} \coloneqq (X_n)_{n=0}^N$ be a Markov chain such that $p_0$ has non-zero entries and each transition matrix $M_n$ has a unique stationary distribution $p_n$ with non-zero entries. In this context, we call a family $\vect{E}=(E_n)_{n=0}^N$ of functions $E_n:S\to\mathbb{R}$ a \emph{family of energies} of $\vect{X}$, if $E_n$ is an energy function of $p_n$ for all $0 \leq n \leq N$. We define the \emph{work} of a realization $\vect{x}=(x_0,x_1,..,x_N) \in S^{N+1}$ of $\vect{X}$, with respect to a family of energies $\vect{E}$ of $\vect{X}$, as
\begin{equation}
\label{defi work}
    W_{\vect{X},\vect{E}}(\vect{x}) \coloneqq \sum_{n=0}^{N-1} E_{n+1}(x_n)-E_{n}(x_n) \, .
\end{equation}
Given another family of energies $\vect{E'}=(E'_n)_{n=0}^N$ of $\vect{X}$, we have 
\begin{equation}
\label{work preli}
 W_{\vect{X},\vect{E}}(\vect{x}) = W_{\vect{X},\vect{E'}}(\vect{x}) + (c_N -c_0) \, ,
\end{equation} 
where $c_n \coloneqq E_n-E'_n$ are constants by \eqref{ener shift}. Hence, without fixing a family of energies of $\vect{X}$, the work defined in \eqref{defi work}  is unique up to a constant. Whenever $\vect{X}$ and $\vect{E}$ are clear from the context, we may simply use $W$ instead of $W_{\vect{X},\vect{E}}$ for brevity.

Similarly, the \emph{heat} of a realization $\vect{x}$ of $\vect{X}$ with respect to a family of energies $\vect{E}$ is given by
\begin{equation}
    Q_{\vect{X},\vect{E}}(\vect{x}):= \sum_{n=1}^N E_n(x_n) - E_n(x_{n-1}) \, .
\end{equation}
Given another family of energies $\vect{E'}\coloneqq (E'_n)_{n=0}^N$ of $\vect{X}$, we have 
\begin{equation}
    Q_{\vect{X},\vect{E}}(\vect{x})
    =Q_{\vect{X},\vect{E'}}(\vect{x})
\end{equation}
by \eqref{ener shift}. We may, thus, use $Q_{\vect{X}}$ instead of $Q_{\vect{X},\vect{E}}$, or even $Q$ in case $\vect{X}$ is clear. 

Moreover, if $F_n$ is the free energy associated to $p_n$ for any $0\leq n\leq N$, we call $\Delta F_{\vect{X},\vect{E}} \coloneqq F_N-F_0$ the \emph{free energy difference} associated to $\vect{E}$, satisfying 
\begin{equation}
\label{free ener diff}
    \Delta F_{\vect{X},\vect{E}} =  \Delta F_{\vect{X},\vect{E'}}+ (c_N-c_0) 
\end{equation}
by \eqref{free energy shift}. While both  $W_{\vect{X},\vect{E}}$ and $\Delta F_{\vect{X},\vect{E}}$ depend on the difference between the constants $c_N$ and $c_0$, the so-called \emph{dissipated work} 
\[
W^d_{\vect{X},\vect{E}}(\vect{x}) \coloneqq W_{\vect{X},\vect{E}}(\vect{x}) - \Delta F_{\vect{X},\vect{E}}
\]
does not, that is, for any realization $\vect{x}$ of $\vect{X}$, we have
\begin{equation}
\label{dissip indep}
     W^d_{\vect{X},\vect{E}}(\vect{x}) = W^d_{\vect{X},\vect{E'}}(\vect{x})
\end{equation}
as a consequence of \eqref{work preli} and \eqref{free ener diff}. 

Notice, in the context of the Markov chain framework we have adopted here, the first law of thermodynamics is a direct consequence of the definitions of work and heat (see \eqref{1st law thermo} below),
\begin{equation*}
W_{\vect{X},\vect{E}}(\vect{x}) + Q_{\vect{X},\vect{E}}(\vect{x}) = E_N(x_N)-E_0(x_0).
\end{equation*}
The second law of thermodynamics can be obtained as well, which is a direct consequence of Jarzynski's equality, as we will see in Section \ref{sec:decision}.

\subsection{Main result: Fluctuation theorems for Markov chains}

Our main results are versions of \textit{Jarzynski's equality} \cite{jarzynski1997nonequilibrium,jarzynski2004nonequilibrium,jarzynski1997equilibrium} and \textit{Crooks' fluctuation theorem} \cite{crooks1998nonequilibrium,crooks1999entropy,crooks2000path} for Markov chains, the derivations of which can be found in Sections \ref{sec:jarz} and  \ref{sec:crooks} below. 

Consider a Markov chain $\vect{X}=(X_n)_{n=0}^N$ on a finite state space whose initial distribution $p_0$ has non-zero entries and whose transition matrices $(M_n)_{n=1}^N$ are irreducible. Then, for any family of energies $\vect{E} =(E_n)_{n=0}^N$ of $\vect{X}$, we have
\begin{equation*}
\big\langle e^{-\beta (W(\vect{X})-\Delta F)} \big\rangle = 1 \, ,
\end{equation*}
where $\langle \, \cdot \, 	\rangle$ denotes the expectation operator, and $W=W_{\vect{X},\vect{E}}$, $\Delta F= \Delta F_{\vect{X},\vect{E}}$. In physics, this result is known as Jarzynski's equality, which has been shown in the past to hold under various conditions. In Section \ref{sec:jarz}, we are giving a simple proof in the context of Markov chains as a direct consequence of the definitions of work and free energy (Theorem \ref{Jarzynski's equality}).

Moreover, if all transition matrices of $\vect{X}$ satisfy detailed balance, and $p_0$ is the stationary distribution of $M_1$, then for any possible work value $w$, we have
\begin{equation*}
    \frac{\prob(W_{\vect{X},\vect{E}}=w)}{ \prob(W_{\vect{Y},\vect{\widehat{E}}}=-w)} = e^{\beta(w- \Delta F)},
\end{equation*}
where $\vect{Y}$ is the time reversal of $\vect{X}$ with energies $\vect{\widehat{E}} = (\widehat{E}_n)_{n=0}^N$. The analogous result in physics is known as Crooks' fluctuation theorem. In Section \ref{sec:crooks}, we prove a slightly more general version in the context of Markov chains without detailed balance (Theorem \ref{crooks fluctuation thm}), which is then applied in Corollary \ref{physics crooks} to obtain Crooks' theorem.

\section{Jarzynski's equality for Markov chains}
\label{sec:jarz}

Jarzynki's equation was originally derived for deterministic dynamics \cite{jarzynski1997nonequilibrium,jarzynski2004nonequilibrium} (see also \cite{cohen2004note}) and later extended to stochastic dynamics \cite{jarzynski1997equilibrium} using a Master equation approach. Shortly after that, it was shown in the non-deterministic Markov chain context relying on assumptions about the time reversal of the dynamics \cite{crooks1998nonequilibrium}. In Theorem \ref{Jarzynski's equality} below, we see that in the context of Markov chains Jarzynski's equality is a straightforward consequence of the definitions of work and free energy. Importantly, it does not require any assumptions regarding time reversal, in contrast to the requirements in a previous decision-theoretic approach to fluctuation theorems in \cite{grau2018non}. 

For the proof of Jarzynski's equality for Markov chains, the basic observation is that we start with the expected value of a quantity closely related to the equilibrium distributions of our initial Markov chain $\vect{X}$, namely $e^{-\beta W(\vect{X})}$. With this in mind, we define a new Markov chain $\vect{Y}$ using the transition matrices of $\vect{X}$ and the equilibrium distributions of the individual steps. In particular, we define it in a way such that we cancel the dependency of $e^{-\beta W(\vect{X})}$ on $\vect{X}$ and end up with a constant, whose expected value over $\vect{Y}$ is the constant itself. We include the details in the following theorem.


\begin{theo}[Jarzynski's equality for Markov chains]
\label{Jarzynski's equality}
If $\vect{X}=(X_n)_{n=0}^N$ is a Markov chain on a finite state space $S$ whose initial distribution $p_0$ has non-zero entries and whose transition matrices $(M_n)_{n=1}^N$ are irreducible, then we have, for any family of energies  $\vect{E} =(E_n)_{n=0}^N$ of $\vect{X}$, 
\begin{equation}
\label{eq: jarz}
\big\langle e^{-\beta (W(\vect{X})-\Delta F)} \big\rangle = 1 \, ,
\end{equation}
where $W=W_{\vect{X},\vect{E}}$ and $\Delta F= \Delta F_{\vect{X},\vect{E}}$. 
\end{theo}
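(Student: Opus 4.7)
The plan is to reduce the equality to a telescoping identity that follows directly from the stationarity of each $p_n$ under $M_n$. First, I would unfold the definition of the integrand. Using $p_n(x) = e^{-\beta E_n(x)}/Z_n$ and $F_n = -\tfrac{1}{\beta}\log Z_n$, each increment in the work can be rewritten as
\[
e^{-\beta(E_{n+1}(x_n)-E_n(x_n))} = \frac{Z_{n+1}}{Z_n}\cdot \frac{p_{n+1}(x_n)}{p_n(x_n)} \, .
\]
Taking the product for $n=0,\dots,N-1$ and observing that the partition-function ratios telescope to $Z_N/Z_0 = e^{-\beta \Delta F}$, one obtains the clean representation $e^{-\beta(W(\vect{x})-\Delta F)} = \prod_{n=0}^{N-1} \frac{p_{n+1}(x_n)}{p_n(x_n)}$.

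Second, I would compute the expectation by expanding the joint law of $\vect{X}$ and performing the summation one time step at a time. Namely,
\[
\big\langle e^{-\beta(W-\Delta F)}\big\rangle = \sum_{\vect{x} \in S^{N+1}} p_0(x_0)\prod_{n=1}^N (M_n)_{x_n,x_{n-1}} \prod_{n=0}^{N-1} \frac{p_{n+1}(x_n)}{p_n(x_n)} \, ,
\]
and the $n=0$ denominator cancels $p_0(x_0)$, replacing it with $p_1(x_0)$. Summing over $x_0$, the stationarity $M_1 p_1 = p_1$ gives $\sum_{x_0} p_1(x_0)(M_1)_{x_1,x_0} = p_1(x_1)$, which then annihilates the next denominator $p_1(x_1)$ and leaves $p_2(x_1)$. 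Iterating, the sum over $x_k$ invokes the stationarity of $p_{k+1}$ under $M_{k+1}$ to produce $p_{k+1}(x_{k+1})$, which in turn cancels the following denominator. After summing $x_0,\dots,x_{N-1}$ in this fashion, one is reduced to $\sum_{x_N} p_N(x_N) = 1$, which is the claimed identity.

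The hypotheses of the theorem enter only to make the above manipulation legitimate: irreducibility of each $M_n$ together with Lemma \ref{irred prop} guarantees that the unique stationary distribution $p_n$ has strictly positive entries, so that the ratios $p_{n+1}(x_n)/p_n(x_n)$ are meaningful, and the non-zero entries of $p_0$ permit the initial cancellation. There is no real obstacle beyond careful bookkeeping; the chief subtlety is ordering the telescoping correctly, since one must apply the stationarity of $p_n$ with respect to $M_n$ (and not with respect to any other transition matrix) at step $n$.
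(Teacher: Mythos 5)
Your proof is correct. It differs from the paper's in presentation rather than in substance: the paper reaches the same cancellation by constructing an auxiliary ``reverse'' Markov chain $\vect{Y}$ with transition matrices $(\widehat{M}_{n+1})_{xy} = \tfrac{p_{N-n}(x)}{p_{N-n}(y)}(M_{N-n})_{yx}$, verifying that these are stochastic (which is exactly the stationarity $M_n p_n = p_n$ you invoke), and then identifying $\langle e^{-\beta W}\rangle$ with $\tfrac{Z_N}{Z_0}$ times the total probability mass of $\vect{Y}$. You instead rewrite the integrand as $e^{-\beta(W(\vect{x})-\Delta F)} = \prod_{n=0}^{N-1} p_{n+1}(x_n)/p_n(x_n)$ and marginalize the coordinates one at a time, applying $M_{k+1}p_{k+1}=p_{k+1}$ at each step. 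Both arguments rest on precisely the same two ingredients (stationarity of each $p_n$ under its own $M_n$, and strict positivity of the $p_n$ from irreducibility via Lemma~\ref{irred prop}), and your bookkeeping of where each hypothesis enters is accurate. What your version buys is directness: no auxiliary process needs to be defined or checked to be well-defined. What the paper's version buys is reuse: the chain $\vect{Y}$ introduced in its proof is exactly the object needed later for Proposition~\ref{core crooks} and the Crooks-type results, so constructing it here amortizes the effort. If you wanted to continue to those results, you would end up building $\vect{Y}$ anyway.
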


\begin{proof}
Notice, by Lemma \ref{irred prop}, we have $p_n(x) > 0$ for all $x \in S$ and $0 \leq n\leq N$. We first define a new Markov chain $\vect{Y} \coloneqq (Y_n)_{n=0}^N$ with initial distribution $p_N$ and with transition matrices \smash{$(\widehat{M}_n)_{n=1}^N$}, where for all $x,y \in S$,
\begin{equation}
\label{def rev}
(\widehat{M}_{n+1})_{xy} \coloneqq \frac{p_{N-n}(x)}{p_{N-n}(y)} ({M_{N-n}})_{yx}
\end{equation}
for $0 \leq n \leq N-1$. Notice $\widehat{M}_n$ is a stochastic matrix for $1 \leq n \leq N$ as we have for all $y \in S$
\begin{equation*}
\begin{split}
\sum_{x \in S} (\widehat{M}_n)_{xy} &= \sum_{x \in S} ({M_{N+1-n}})_{yx} \frac{p_{N+1-n}(x)}{p_{N+1-n}(y)} \\
&= \frac{1}{p_{N+1-n}(y)}\sum_{x \in S} ({M_{N+1-n}})_{yx} p_{N+1-n}(x)=1 ,
\end{split}
\end{equation*}
where we applied \eqref{def rev} in the first equality and the fact that $p_n$ is a stationary distribution for  $M_n$ for $1 \leq n \leq N$ by assumption in the last equality. Thus, $\vect{Y}$ is well defined. Note that by definition we have, 
\begin{equation*}
\begin{split}
   \prob(Y_{n+1}=x_{n+1} | Y_n =x_n) &= \prob(Y_{n+1}=x_{n+1} | Y_n =x_n,.., Y_0 = x_0) \\
   &= (\widehat{M}_{n+1})_{x_{n+1} x_n} 
   \end{split}
\end{equation*}
for $0 \leq n \leq N-1$, $(x_0,..,x_{n+1}) \in S^{n+2}$. We can now use $\vect{Y}$ to show the result:
\begin{equation*}
\begin{split}
    \langle e^{-\beta W(\vect{X})} \rangle &\stackrel{(i)}{=} \sum_{x_0,..,x_N \in S} \prob(X_0 = x_0) ({M_1})_{x_1 x_0} ({M_2})_{x_2 x_1}\cdots({M_N})_{x_N x_{N-1}} \times \\
&\qquad \qquad \  \times \frac{e^{-\beta E_1(x_0)}}{e^{-\beta E_0(x_0)}} \cdots \frac{e^{-\beta E_N(x_{N-1})}}{e^{-\beta E_{N-1}(x_{N-1})}} \\
&\stackrel{(ii)}{=} \sum_{x_0,\dots,x_N \in S} \prob(X_0=x_0) (\widehat{M}_N)_{x_0 x_1} \frac{e^{-\beta E_1(x_1)}}{e^{-\beta E_1(x_0)}}\cdots (\widehat{M}_1)_{x_{N-1} x_N} \times\\ 
&\qquad \qquad \ \times \frac{e^{-\beta E_N(x_{N})}}{e^{-\beta E_N(x_{N-1})}} \frac{e^{-\beta E_1(x_0)}}{e^{-\beta E_0(x_0)}}\cdots \frac{e^{-\beta E_N(x_{N-1})}}{e^{-\beta E_{N-1}(x_{N-1})}} \\
&\stackrel{(iii)}{=} \frac{Z_N}{Z_0} \sum_{x_0,\dots,x_N \in S}  \prob(Y_0 = x_N) (\widehat{M}_1)_{x_{N-1} x_N} \cdots (\widehat{M}_N)_{x_0 x_1} \\
&\stackrel{(iv)}{=} e^{-\beta \Delta F},
\end{split}
\end{equation*}
where we use the Markov property in $(i)$ and apply \eqref{def rev} in $(ii)$. In $(iii)$, we cancel the repeated terms coming from the definition of \smash{$(\widehat{M}_n)_{n=1}^N$} and from $e^{-\beta W(\vect{x})}$, since we have
\begin{eqnarray} \nonumber
\sum_{n=1}^N E_n(x_n)-E_n(x_{n-1}) 
& = & E_N(x_N) - E_1(x_0) + \sum_{n=1}^{N-1} E_n(x_n) - E_{n+1}(x_n) \\[4pt]
& = & E_N(x_N) - E_0(x_0) - W(\vect{x}) \label{1st law thermo}
\end{eqnarray}
which leads to  
\begin{equation*}
    \begin{split}
        &\prob(X_0 = x_0)  \frac{e^{-\beta E_1(x_1)}}{e^{-\beta E_1(x_0)}}..  \frac{e^{-\beta E_N(x_{N})}}{e^{-\beta E_N(x_{N-1})}} \frac{e^{-\beta E_1(x_0)}}{e^{-\beta E_0(x_0)}}.. \frac{e^{-\beta E_N(x_{N-1})}}{e^{-\beta E_{N-1}(x_{N-1})}} \\
        &= \frac{e^{- \beta E_0(x_0)}}{Z_0} e^{-\beta (\sum_{n=1}^ N E_n(x_n) - E_n(x_{n-1}))} e^{-\beta W(\vect{x})} \\
        &= \frac{1}{Z_0} e^{-\beta( E_0(x_0) + E_N(x_N) - E_0(x_0) - W(\vect{x}) + W(\vect{x}))} \\
        &= \frac{Z_N}{Z_0} \prob(Y_0 = x_N).
    \end{split}
\end{equation*}
Lastly, we apply the definition of $\Delta F$ and normalization of $\vect{Y}$ in $(iv)$.
\end{proof}

The main purpose of  proving Theorem \ref{Jarzynski's equality} is to show that Jarzynski's equality can be obtained under milder conditions than the ones that were considered before in decision-making. It should be noted that,
in contrast to the usual approaches like \cite{crooks1998nonequilibrium} where  $\vect{Y}$ is assumed to be the time reversal of $\vect{X}$, here it is just a convenient mathematical object for the proof. A similar approach to Jarzynksi's equality in the context of time-continuous Markov chains can be found in \cite{ge2007generalized}. Our approach to the discrete-time case in Theorem \ref{Jarzynski's equality} is much simpler, since we do not need measure-theoretic concepts nor smoothness assumptions.




\section{Crooks' fluctuation theorem for Markov chains}
\label{sec:crooks}

The original derivation of Crooks' fluctuation theorem for Markovian dynamics \cite{crooks1998nonequilibrium,crooks1999entropy,crooks2000path} was carried out using a definition of work different from the one in \eqref{defi work}. In this section, we derive the theorem for Markov chains using \eqref{defi work} and comment on the difference between these approaches in the discussion.
As discussed in the introduction, an additional hypothesis is needed for the result to hold in our setup. We derive Crooks' fluctuation theorem using this additional assumption in Theorem \ref{crooks fluctuation thm} and Corollary \ref{physics crooks} below, and then,  in Proposition \ref{counterexample}, we provide an example where this requirement is violated and the theorem is false everywhere.


Before proving the intermediate results and, finally, Crooks' fluctuation theorem, let us briefly sketch the procedure we follow throughout this section. We start with Proposition \ref{core crooks}, where we use the same Markov chain $\vect{Y}$ that we defined in the proof of Theorem \ref{Jarzynski's equality} and obtain, along similar lines, a more precise relation between $\vect{X}$ and $\vect{Y}$. In particular, between the probability of some realization of $\vect{X}$ and that of the same realization (with the events taking place in reversed order) of $\vect{Y}$. As a matter of fact, we show that, for a given $\vect{X}$, $\vect{Y}$ is (roughly) the only Markov chain fulfilling such a relation. Then, in Proposition \ref{Forward and reverse work equation}, we show how the driving signal of $\vect{X}$ and $\vect{Y}$ are related. In order to do so, we exploit the relation between the equilibrium distributions of $\vect{X}$ and $\vect{Y}$, which comes from the fact both Markov chains share the same equilibrium distributions. By a combining these two propositions, we reach a relation between the probability distributions of the driving signals of $\vect{X}$ and $\vect{Y}$ in Theorem \ref{crooks fluctuation thm}. Lastly, in Corollary \ref{physics crooks}, we impose an extra condition on the equilibrium distributions of $\vect{X}$ in order to obtain Crooks' fluctuation theorem in its usual form. We proceed now to show the details involved in the argument we just presented. We start by proving Proposition \ref{core crooks}.

\begin{prop}
\label{core crooks}
If $\vect{X}=(X_n)_{n=0}^N$ is a Markov chain on a finite state space $S$ whose initial distribution $p_0$ has non-zero entries and whose transition matrices $(M_n)_{n=1}^N$ are irreducible, then there exists a unique Markov chain $\vect{Y}=(Y_n)_{n=0}^N$ such that $Y_0 \sim p_N$, $(\widehat{M}_{n+1})_{xx} = (M_{N-n})_{xx}$ $\forall x \in S$, $0 \leq n \leq N-1$, where $(\widehat{M}_n)_{n=1}^N$ are the transition matrices of $\vect{Y}$, and for $\vect{x} = (x_0,x_1,\dots,x_N) \in S^{N+1}$,
\begin{equation}
\label{Crooks eq I}
\begin{split}
&\prob(X_1=x_1,\dots,X_N = x_N|X_0=x_0) \\
&=  \prob(Y_1=x_{n-1},\dots,Y_N=x_0|Y_0=x_N)\, e^{-\beta Q(\vect{x})}
\end{split}
\end{equation}
for any family of energies $\vect{E} =(E_n)_{n=0}^N$ of $\vect{X}$. Moreover, this unique $\vect{Y}$ satisfies
\begin{equation} 
\label{Crooks eq II}
\prob(\vect{X}=\vect{x}) \, = \, \prob(\vect{Y}=\vect{x}^R) \, e^{\beta( W(\vect{x}) - \Delta F)} ,
\end{equation}
where $W=W_{\vect{X},\vect{E}}$, $\Delta F= \Delta F_{\vect{X},\vect{E}}$, and $\vect{x}^R\coloneqq (x_N,\dots,x_0)$ denotes the reversal of $\vect{x}$. In particular, the probability of $\vect{X}$ following $\vect{x}$ is the same as the probability of $\vect{Y}$ following $\vect{x}^R$ if and only if $\Delta F = W(\vect{x})$.
\end{prop}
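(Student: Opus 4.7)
My plan is to take the very chain $\vect{Y}$ that was introduced in the proof of Theorem \ref{Jarzynski's equality}: initial distribution $p_N$ and transition matrices $(\widehat{M}_{n+1})_{xy} \coloneqq \frac{p_{N-n}(x)}{p_{N-n}(y)} (M_{N-n})_{yx}$. Those matrices are already shown to be stochastic in that proof, so $\vect{Y}$ is a bona fide Markov chain, and Lemma~\ref{irred prop} makes every $p_n$ strictly positive so the ratios are well defined. The diagonal condition $(\widehat{M}_{n+1})_{xx} = (M_{N-n})_{xx}$ is automatic from $p_{N-n}(x)/p_{N-n}(x)=1$.

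\textbf{Verifying \eqref{Crooks eq I}.} Applying the Markov property, I expand the left-hand side as $\prod_{k=1}^{N}(M_k)_{x_k x_{k-1}}$ and the reverse-trajectory conditional on the right-hand side as $\prod_{n=0}^{N-1}(\widehat{M}_{n+1})_{x_{N-1-n}x_{N-n}}$. Substituting the definition of $\widehat{M}_{n+1}$ and re-indexing with $k=N-n$ converts the latter into $\prod_{k=1}^{N}(M_k)_{x_k x_{k-1}} \cdot \prod_{k=1}^{N}\frac{p_k(x_{k-1})}{p_k(x_k)}$. Using $p_k=e^{-\beta E_k}/Z_k$, the second factor evaluates to $\exp\bigl(\beta\sum_{k=1}^{N}(E_k(x_k)-E_k(x_{k-1}))\bigr)=e^{\beta Q(\vect{x})}$, which cancels the $e^{-\beta Q(\vect{x})}$ on the right and yields the identity.

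\textbf{Uniqueness.} If $\vect{Y}'$ is another Markov chain meeting all stated conditions, then \eqref{Crooks eq I} fully determines $\prob(Y'_1=x_{N-1},\ldots,Y'_N=x_0\mid Y'_0=x_N)$ in terms of data intrinsic to $\vect{X}$. Combined with $Y'_0\sim p_N$, this fixes the full joint law of $\vect{Y}'$, which therefore coincides with that of $\vect{Y}$. A short induction---using that irreducibility of each $M_n$ forces every column of $M_n$ to contain a positive entry, since columns of a stochastic matrix sum to~$1$---shows that $\prob(Y_n=y)>0$ for every $n$ and $y$, so the transition matrices are recoverable from the joint law by conditioning. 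Hence $\vect{Y}'=\vect{Y}$.

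\textbf{Derivation of \eqref{Crooks eq II}.} Multiplying \eqref{Crooks eq I} by $p_0(x_0)=\prob(X_0=x_0)$ turns the left-hand side into $\prob(\vect{X}=\vect{x})$; on the right, pairing the conditional with $p_N(x_N)$ gives $\prob(\vect{Y}=\vect{x}^R)$ at the cost of a factor $p_0(x_0)/p_N(x_N)$. Writing $p_n(x)=e^{-\beta E_n(x)}/Z_n$ and invoking the first law $W(\vect{x})+Q(\vect{x})=E_N(x_N)-E_0(x_0)$ from \eqref{1st law thermo}, this ratio simplifies to $e^{\beta(W(\vect{x})+Q(\vect{x})-\Delta F)}$, and together with the residual $e^{-\beta Q(\vect{x})}$ it produces exactly $e^{\beta(W(\vect{x})-\Delta F)}$. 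The final ``if and only if'' then reads off \eqref{Crooks eq II}. I expect the uniqueness step to be the main technical obstacle, since one must be careful to confirm that each marginal $\prob(Y_n=y)$ remains strictly positive in order to recover $\widehat{M}_{n+1}$ unambiguously from the joint distribution.
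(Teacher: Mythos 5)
Your construction, your verification of \eqref{Crooks eq I}, and your derivation of \eqref{Crooks eq II} coincide with the paper's proof: both take the chain $\vect{Y}$ from the proof of Theorem \ref{Jarzynski's equality}, expand both sides via the Markov property and \eqref{def rev}, identify the telescoping product of Boltzmann ratios with $e^{\beta Q(\vect{x})}$, and then multiply by $p_0(x_0)$ and invoke \eqref{1st law thermo} to pass to \eqref{Crooks eq II}. Where you genuinely diverge is uniqueness. The paper fixes $a\neq b$ and evaluates \eqref{Crooks eq I} on the specific staircase trajectories $(a,\dots,a,b,\dots,b)$, using the diagonal condition $({M}'_{n+1})_{xx}=(M_{N-n})_{xx}$ to cancel the diagonal factors and read off $({M}'_n)_{ab}=(\widehat{M}_n)_{ab}$. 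You instead observe that \eqref{Crooks eq I} determines the entire conditional law of the reverse trajectory given $Y'_0$, hence (with $Y'_0\sim p_N$) the full joint law, and then recover the transition matrices by conditioning once all marginals $\prob(Y_n=y)$ are shown positive — which follows, as you say, because every column of a stochastic matrix has a positive entry, so every row of each $\widehat M_{n+1}$ does, and positivity propagates from $p_N$. Your route buys two things: it does not use the diagonal condition at all (so you prove uniqueness in a strictly larger class), and it sidesteps a quiet assumption in the paper's cancellation step, namely that the diagonal entries $(M_k)_{aa}$, $(M_k)_{bb}$ being divided out are nonzero — which irreducibility does not guarantee (consider a fixed-point-free permutation matrix), and for which the staircase trajectories carry no information since both sides of \eqref{Crooks eq I} vanish. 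The only places you are terse are the translation from columns of $M_{N-n}$ to rows of $\widehat M_{n+1}$ in the positivity induction, and the final ``if and only if,'' which (as in the paper) tacitly assumes $\prob(\vect{X}=\vect{x})>0$; neither is a real gap.
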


\begin{proof}
Consider the Markov chain $\vect{Y}=(Y_n)_{n=0}^N$ defined in the proof of Theorem \ref{Jarzynski's equality}, which is well-defined, since we have the same hypotheses. We can proceed in the same way as in Theorem \ref{Jarzynski's equality} to get
\begin{equation*}
\label{heateq}
    \begin{split}
        &\prob(X_1=x_1,\dots,X_N=x_N|X_0=x_0) \stackrel{(i)}{=} ({M_1})_{x_1 x_0}\cdots({M_N})_{x_N x_{N-1}}\\
        &\stackrel{(ii)}{=} \frac{e^{-\beta E_1(x_1)}}{e^{-\beta E_1(x_0)}}\cdots \frac{e^{-\beta E_N(x_N)}}{e^{-\beta E_N(x_{N-1})}}(\widehat{M}_N)_{x_0 x_1}\cdots(\widehat{M}_1)_{x_{N-1} x_N} \\ &\stackrel{(iii)}{=} e^{-\beta Q(\vect{x})} \prob(Y_1=x_{n-1},\dots,Y_N=x_0|Y_0=x_N),
    \end{split}
\end{equation*}
where we applied the Markov property of $\vect{X}$ in $(i)$, \eqref{def rev} in $(ii)$, and the definition of $Q$ plus the Markov property of $\vect{Y}$ in $(iii)$. This proves the existence of a Markov chain with the desired properties. Moreover, we have
\begin{equation*}
\begin{split}
    \prob(\vect{X} = \vect{x}) & \stackrel{(i)}{=} \frac{e^{-\beta E_0(x_0)}}{Z_0} e^{-\beta Q(\vect{x})} \prob(Y_1=x_{n-1},\dots,Y_N=x_0|Y_0=x_N) \\
    &\stackrel{(ii)}{=} \frac{Z_N}{Z_0}e^{-\beta (Q(\vect{x})-(E_N(x_N)-E_0(x_0)))} \, \prob(Y_0=x_N,\dots,Y_N=x_0)\\
    &\stackrel{(iii)}{=}e^{\beta (W(\vect{x})-\Delta F)} \, \prob(\vect{Y}=\vect{x}^R) ,
    \end{split}
\end{equation*}
where we applied \eqref{heateq}, the definition of conditional probability and the fact $X_0$ follows $p_0$ in $(i)$, the definition of conditional probability, the fact that $Y_0$ follows $p_N$ in $(ii)$, and both the definitions of $Q$ and $\Delta F$ plus \eqref{1st law thermo} in $(iii)$.

It remains to show the uniqueness of $\vect{Y}$. Assume $\vect{Z}=(Z_n)_{n=0}^N$ is a Markov chain with transition matrices \smash{$({M}'_n)_{n=1}^N$} such that $Z_0 \sim p_N$, \eqref{Crooks eq I} holds, and $({M}'_{n+1})_{xx} = (M_{N-n})_{xx}$ for all $x \in S$ and $0 \leq n \leq N-1$. Consider some $n$ such that $1<n<N$ and some $a,b \in S$ with $a \neq b$. By \eqref{Crooks eq I}, we have
\begin{equation*}
\begin{split}
       &\prob(X_1=a,\dots,X_{N-(n-2)}=a,X_{N-(n-1)}=b,\dots,X_N =b|X_0=a) e^{-\beta E_{N-(n-1)}(a)} \\ &=\prob(Z_1=b,\dots,Z_{n-1}=b,Z_n=a,\dots,Z_N=a|Z_0=b) e^{-\beta E_{N-(n-1)}(b)}.
       \end{split}
\end{equation*}
Applying the Markov property, the fact that $({M}'_{n+1})_{xx} = (M_{N-n})_{xx}$ $\forall x \in S$, $0 \leq n \leq N-1$, and the definition of $E_{N-(n-1)}$, we get
\begin{equation*}
    ({M}'_n)_{ab} = (M_{N-(n-1)})_{ba} \frac{p_{N-(n-1)}(a)}{p_{N-(n-1)}(b)} = (\widehat{M}_n)_{ab}.
\end{equation*}
Since the argument also works for $n=1$ and $n=N$, the case $a=b$ holds by definition, and $Y_0,Z_0 \sim p_N$, we have $\vect{Z}=\vect{Y}$.
\end{proof}

We say $\vect{X}$ is \emph{microscopically reversible} \cite{crooks2000path} if \eqref{Crooks eq I} is satisfied for $\vect{Y}$ being the  time reversal of $\vect{X}$, i.e. if the unique Markov chain $\vect{Y}$ that exists by Proposition \ref{core crooks} has initial distribution $p_N$ and transition matrices $(M_{N-n+1)})_{n=1}^N$. Notice, if this property holds, then \eqref{Crooks eq II} relates the probability of observing a realization $\vect{x}$ of a Markov chain $\vect{X}$ with that of observing the reversed realization $\vect{x}^R$ in the time reversal $\vect{Y}$ of $\vect{X}$, that is, when starting with the equilibrium distribution of the last environment and choosing according to the same conditional probabilities but in reversed order. This is the case if and only if the transition matrices of $\vect{X}$ satisfy detailed balance, as we show in the following lemma.



\begin{lem}
\label{physical core crooks}
If $\vect{X}=(X_n)_{n=0}^N$ is a Markov chain on a finite state space $S$ with irreducible transition matrices $(M_n)_{n=1}^N$ and initial distribution with non-zero entries, then $M_n$ satisfies detailed balance for $1 \leq n \leq N$ if and only if $\vect{X}$ is microscopically reversible.
\end{lem}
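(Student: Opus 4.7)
The plan is to leverage Proposition \ref{core crooks} to reduce microscopic reversibility to an explicit matrix identity, and then match that identity term-by-term against the detailed balance equation \eqref{det balance}. By Proposition \ref{core crooks}, under our hypotheses the Markov chain $\vect{Y}$ satisfying \eqref{Crooks eq I} is unique, and it coincides with the one constructed in the proof of Theorem \ref{Jarzynski's equality}: it has initial distribution $p_N$ and transition matrices $(\widehat{M}_n)_{n=1}^N$ given by the explicit formula \eqref{def rev}. Hence $\vect{X}$ is microscopically reversible iff this $\vect{Y}$ equals the time reversal of $\vect{X}$, that is, iff
\begin{equation*}
\widehat{M}_{n+1} = M_{N-n} \qquad \text{for all } 0 \leq n \leq N-1.
\end{equation*}

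First I would unpack this identity using \eqref{def rev}. Writing $k := N-n$, the condition becomes $(M_k)_{xy} = \frac{p_k(x)}{p_k(y)} (M_k)_{yx}$ for all $x,y \in S$ and all $1 \leq k \leq N$. Since $p_k$ has strictly positive entries by Lemma \ref{irred prop}, clearing the denominator gives exactly
\begin{equation*}
(M_k)_{xy}\, p_k(y) = (M_k)_{yx}\, p_k(x) \qquad \forall x,y \in S,
\end{equation*}
which is the detailed balance equation \eqref{det balance} for $M_k$ with respect to $p_k$. Because $p_k$ is by hypothesis the unique stationary distribution of $M_k$, this is precisely the statement that $M_k$ satisfies detailed balance in the sense defined in Section \ref{basics}.

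Both implications then follow at once. If every $M_n$ satisfies detailed balance with $p_n$, rearranging \eqref{det balance} yields $\widehat{M}_{n+1} = M_{N-n}$, so the unique $\vect{Y}$ from Proposition \ref{core crooks} is the time reversal of $\vect{X}$ and microscopic reversibility holds. Conversely, if $\vect{X}$ is microscopically reversible, then by the uniqueness in Proposition \ref{core crooks} the time reversal must equal the $\vect{Y}$ built via \eqref{def rev}, and the matrix identity above forces detailed balance for each $M_n$.

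I do not foresee a real obstacle here: the work was already done in Proposition \ref{core crooks}, which pinned down $\vect{Y}$ uniquely via the explicit formula \eqref{def rev}. The only subtlety to be careful about is invoking Lemma \ref{irred prop} to ensure $p_k(y) > 0$, so that dividing by it is legitimate and the rearrangement between $\widehat{M}_{n+1}= M_{N-n}$ and \eqref{det balance} is an equivalence rather than a one-way implication.
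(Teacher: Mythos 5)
Your proof is correct and follows essentially the same route as the paper: reduce microscopic reversibility, via the uniqueness in Proposition \ref{core crooks}, to the identity $\widehat{M}_{n+1}=M_{N-n}$ and then read off detailed balance from \eqref{def rev}, using Lemma \ref{irred prop} for positivity of $p_k$. The only minor imprecision is that the uniqueness in Proposition \ref{core crooks} is stated among chains with $Y_0\sim p_N$ \emph{and} matching diagonal entries $(\widehat{M}_{n+1})_{xx}=(M_{N-n})_{xx}$, so you should note explicitly that the time reversal of $\vect{X}$ satisfies both of these conditions before invoking it; once that is said, your argument is complete.
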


\begin{proof}
If $\vect{X}$ satisfies detailed balance, that is, if each $p_n$ satisfies \eqref{det balance}, then by the definition of $\widehat{M}$ in \eqref{def rev}, we have \smash{$\widehat{M}_{n} = M_{N-(n-1)}$} for each $1\leq n\leq N$. Hence, in this case, the Markov chain $\vect{Y}$ constructed in Theorem \ref{Jarzynski's equality} and Proposition \ref{core crooks} is the time reversal of $\vect{X}$ and so $\vect{X}$ is microscopically reversible by definition.

It remains to show, if $\vect{X}$ is microscopically reversible, then its transition matrices satisfy detailed balance. Let $\vect{Y}$ be the time reversal of $\vect{X}$. Since, by assumption, $Y_0 \sim p_N$ and $\widehat{M}_{N-(n-1)} = M_{n}$, where  \smash{$(\widehat{M}_{n})_{n=1}^N$} are the transition matrices of $\vect{Y}$, we can follow the proof of Proposition \ref{core crooks} to get for $1 \leq n \leq N$
\begin{equation*}
       (M_n)_{ab} = (\widehat{M}_{N-(n-1)})_{ab} = (M_n)_{ba} \frac{p_n(a)}{p_n(b)}.
\end{equation*}
Thus, for each $1\leq n \leq N$, $p_n$ satisfies detailed balance with respect to $M_n$.
\end{proof}

In particular, this means that, if $\vect{X}$ satisfies detailed balance, then the time reversal $\vect{Y}$ of $\vect{X}$ satisfies \eqref{Crooks eq II}, that is
\begin{equation} 
\label{physics Crooks eq II}
\prob(\vect{X} = \vect{x})= \prob(\vect{Y}=\vect{x}^R) \, e^{\beta( W(\vect{x}) - \Delta F)} =  \prob(\vect{Y} = \vect{x}^R) \,  e^{\beta W^d(\vect{x})} 
\end{equation}
for any family of energies $\vect{E} =(E_n)_{n=0}^N$ of $\vect{X}$, where $W^d \coloneqq W^d_{\vect{X}}$ is the dissipated work of $\vect{X}$.
Thus, in this case, dissipated work is an unambiguous measure of the discrepancy between the probability of observing a realization of $\vect{X}$ and the probability of observing the same trajectory in reversed order in the time reversal of $\vect{X}$.  We have, hence, an unambiguous measure of \emph{hysteresis} (see Section \ref{sec:decision}).


Before showing Theorem \ref{crooks fluctuation thm} and Crooks' fluctuation theorem, we relate the work of $\vect{X}$ with that of its time reversal.


\begin{prop}
\label{Forward and reverse work equation}
If $\vect{X} = (X_n)_{n=0}^N$ is a Markov chain on a finite state space $S$ with initial distribution with non-zero entries and irreducible transition matrices, then there exists a Markov chain $\vect{Y}=(Y_n)_{n=0}^N$ and a constant $k\in\mathbb R$ such that
\begin{equation}
\label{forward work = back work + sth}
    W_{\vect{Y},\vect{\widehat{E}}}(\vect{x}^R) = -W_{\vect{X},\vect{E}}(\vect{x}) + E_1(x_0)-E_0(x_0) + k \qquad \forall \vect{x} \in S^{N+1}, 
\end{equation}
where $\vect{E}$ and $\vect{\widehat{E}}$ are families of energies of $\vect{X}$ and $\vect{Y}$, respectively. 
Moreover, if the stationary distribution $p_1$ of $M_1$ coincides with the initial distribution $p_0$ of $\vect{X}$, then there exists a constant $k\in\mathbb R$ such that
\begin{equation}
\label{opposite work}
    W_{\vect{Y},\vect{\widehat{E}}}(\vect{x}^R) = -W_{\vect{X},\vect{E}}(\vect{x}) + k \qquad \forall \vect{x} \in S^{N+1}.
\end{equation}
\end{prop}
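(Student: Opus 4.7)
The plan is to exhibit $\vect{Y}$ explicitly as the Markov chain constructed in the proof of Theorem~\ref{Jarzynski's equality} (equivalently, the chain furnished by Proposition~\ref{core crooks}): $Y_0 \sim p_N$ and transition matrices $(\widehat{M}_n)_{n=1}^N$ given by \eqref{def rev}. Before computing anything, I would verify that $\vect{Y}$ fits the standing assumptions required for work to be defined, namely that $\widehat{M}_n$ is irreducible with a unique stationary distribution that has non-zero entries. Irreducibility transfers from $M_{N-n+1}$ to $\widehat{M}_n$ because the graph of positive entries of $\widehat{M}_n$ is the reverse of the corresponding graph for $M_{N-n+1}$, and the reverse of a strongly connected finite graph is strongly connected. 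A direct one-line computation using \eqref{def rev} shows that $p_{N-(n-1)}$ is stationary for $\widehat{M}_n$, and irreducibility gives uniqueness (Lemma~\ref{irred prop}). The initial distribution $p_N$ has non-zero entries by the same lemma applied to $M_N$.

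Next, pick any family of energies $\vect{\widehat{E}} = (\widehat{E}_n)_{n=0}^N$ for $\vect{Y}$. Because energy functions for a given strictly positive distribution are unique up to an additive constant \eqref{ener shift}, we may write $\widehat{E}_n = E_{N-(n-1)} + c_n$ for $1 \le n \le N$ and $\widehat{E}_0 = E_N + c_0$, for suitable constants $c_n$. Now unfold the definition \eqref{defi work} for $\vect{y} \coloneqq \vect{x}^R$, writing $y_n = x_{N-n}$,
\begin{equation*}
W_{\vect{Y},\vect{\widehat{E}}}(\vect{x}^R) \;=\; \sum_{n=0}^{N-1} \bigl[\widehat{E}_{n+1}(x_{N-n}) - \widehat{E}_n(x_{N-n})\bigr] ,
\end{equation*}
substitute the identification of $\widehat{E}_n$ with $E_{N-(n-1)}$ (and $\widehat{E}_0$ with $E_N$), and change variables $m = N-n$. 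The $n=0$ term contributes $E_N(x_N) - E_N(x_N) = 0$ (up to the constant $c_1 - c_0$), and the remaining terms collapse to $-\sum_{m=1}^{N-1}\bigl[E_{m+1}(x_m)-E_m(x_m)\bigr]$. Comparing with $W_{\vect{X},\vect{E}}(\vect{x}) = \sum_{n=0}^{N-1} [E_{n+1}(x_n)-E_n(x_n)]$, the missing summand is precisely the $n=0$ term $E_1(x_0)-E_0(x_0)$, so
\begin{equation*}
W_{\vect{Y},\vect{\widehat{E}}}(\vect{x}^R) \;=\; -W_{\vect{X},\vect{E}}(\vect{x}) + E_1(x_0) - E_0(x_0) + k ,
\end{equation*}
where $k$ collects the constants $c_n$; this establishes \eqref{forward work = back work + sth}.

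For the second claim, observe that under the hypothesis $p_0 = p_1$ the two energy functions $E_0$ and $E_1$ describe the same distribution, so by \eqref{ener shift} they differ by a constant, whence $E_1(x_0) - E_0(x_0)$ is independent of $x_0$ and can be absorbed into a redefined constant, yielding \eqref{opposite work}. No real obstacle is expected; the only care needed is the bookkeeping of indices when relating $\widehat{E}_n$ to $E_{N-(n-1)}$ and when telescoping the two sums, together with a small sanity check that the stationary-distribution indices of $\vect{Y}$ are shifted so that $\widehat{E}_0$ and $\widehat{E}_1$ both act as energies for $p_N$, which is what makes the boundary term at $n=0$ disappear cleanly.
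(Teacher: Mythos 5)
Your proposal is correct and follows essentially the same route as the paper: you take $\vect{Y}$ to be the chain from the proof of Theorem~\ref{Jarzynski's equality}, establish that $\widehat{E}_0$ and $\widehat{E}_1$ are both energies of $p_N$ while $\widehat{E}_{n+1}$ is an energy of $p_{N-n}$ up to additive constants (the content of Lemmas~\ref{X irred implies Y irred} and \ref{energy rela}, which you re-derive inline, with the irreducibility transfer phrased via graph reversal instead of explicit path products), and then perform the same index change $m=N-n$ to identify the missing $n=0$ summand $E_1(x_0)-E_0(x_0)$, absorbing it into the constant when $p_0=p_1$. No gaps.
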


\begin{proof}
Let $\vect{Y}=(Y_n)_{n=0}^N$ be the Markov chain defined in the proof of Theorem \ref{Jarzynski's equality}. Since work is well-defined (up to a constant) for both $\vect{X}$ and $\vect{Y}$, by Lemma \ref{energy rela} (see Appendix \ref{appendix}) we can use the relation between the energy functions of both chains in \eqref{const diff} to show \eqref{forward work = back work + sth}. We have
\begin{equation*}
\begin{split}
    W_{\vect{Y},\vect{\widehat{E}}}(\vect{x}^R) &= \sum_{n=0}^{N-1} \widehat{E}_{n+1}(x^R_n) - \widehat{E}_n(x^R_n) \\
    &\stackrel{(i)}{=} \sum_{n=1}^{N-1} \widehat{E}_{n+1}(x^R_n) - \widehat{E}_n(x^R_n) + \widehat{c}\\
    &\stackrel{(ii)}{=} \sum_{n=1}^{N-1} E_{N-n}(x_{N-n}) - E_{N-(n-1)}(x_{N-n}) + k\\
    &\stackrel{(iii)}{=} \sum_{m=1}^{N-1} E_m(x_m) - E_{m+1}(x_m) + k\\
    & = -W_{\vect{X},\vect{E}}(\vect{x}) + E_1(x_0) - E_0(x_0) + k,
    \end{split}
\end{equation*}
where in $(i)$ we defined $\widehat{c} \coloneqq \widehat{E_1} - \widehat{E_0}$, which is a constant since $\widehat{E}_{0} = E_{N} + k_{0}$ and $\widehat{E}_{1} = E_{N} + k_{1}$ by Lemma \ref{energy rela} (see Appendix \ref{appendix}). 
In $(ii)$, we applied the definition of $\vect{x}^R$ and \eqref{const diff}, cancelled the repeated $k_n$, defined as in Lemma \ref{energy rela}, for all $1<n<N$ and introduced $k \coloneqq k_N - k_1 +\widehat{c}= (\widehat{E}_N - E_1) - (\widehat{E}_1-E_N) +(\widehat{E}_1 - \widehat{E}_0) =(\widehat{E}_N - E_1) - (\widehat{E}_0-E_N)$. In $(iii)$, we rewrite the sum in terms of $m \coloneqq N-n$.

For the second statement, notice, if $p_0$ is the stationary distribution of $M_1$, then there exists a constant $c$ such that $E_1=E_0 + c$ by \eqref{ener shift}. Thus, we have $k \coloneqq (\widehat{E}_N-E_0) - (\widehat{E}_0-E_N) = (E_1 - E_0) + (\widehat{E}_N-E_1) - (\widehat{E}_0-E_N) =c +k'=E_1(x_0)-E_0(x_0) + k'$ for all $x_0 \in S$, where $k'$ is the constant in \eqref{forward work = back work + sth}.
\end{proof}

If detailed balance holds, then \eqref{forward work = back work + sth} and \eqref{opposite work} relate the work along a realization $\vect{x}$ of $\vect{X}$ with the reversed realization $\vect{x}^R$ of its time reversal $\vect{Y}$. More precisely, we obtain the following corollary.

\begin{coro}[When work is odd under time reversal]
\label{work coro}
If all transition matrices of $\vect{X}$ in Proposition \ref{Forward and reverse work equation} satisfy detailed balance and $\vect{Y}$ is the time reversal of $\vect{X}$, then the constants $k$ in \eqref{forward work = back work + sth} and \eqref{opposite work} can be taken to be zero.
\end{coro}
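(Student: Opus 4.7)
The plan is to exploit the gauge freedom in the choice of energy functions---equation \eqref{ener shift} says any two energies of the same distribution differ by a constant---so as to force the explicit constant derived in the proof of Proposition \ref{Forward and reverse work equation} to vanish. Under detailed balance, Lemma \ref{physical core crooks} guarantees that the auxiliary chain $\vect{Y}$ constructed in Theorem \ref{Jarzynski's equality} coincides with the time reversal of $\vect{X}$, so $\widehat{M}_n = M_{N-n+1}$ for each $n$. Hence the stationary distribution of $\widehat{M}_n$ is $p_{N-n+1}$, and the initial distribution of $\vect{Y}$ is $p_N$.

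The main step is to choose the family of energies of $\vect{Y}$ as $\widehat{E}_n \coloneqq E_{N-n+1}$ for $1 \leq n \leq N$ and $\widehat{E}_0 \coloneqq E_N$. This is a legitimate assignment: for each $1 \leq n \leq N$, the functions $\widehat{E}_n$ and $E_{N-n+1}$ are both energies of $p_{N-n+1}$, and likewise $\widehat{E}_0$ and $E_N$ are both energies of $p_N$, so \eqref{ener shift} lets us identify them by fixing the additive constants appropriately. Substituting these identifications into the closed-form expression $k = (\widehat{E}_N - E_1) - (\widehat{E}_0 - E_N)$ obtained in the proof of Proposition \ref{Forward and reverse work equation}, both differences vanish identically, yielding $k = 0$ in \eqref{forward work = back work + sth}.

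For \eqref{opposite work}, the extra hypothesis inherited from Proposition \ref{Forward and reverse work equation} is $p_0 = p_1$, under which $E_0$ and $E_1$ are two energies of the same distribution. I would invoke \eqref{ener shift} a second time to choose the representative $E_0 \coloneqq E_1$; since none of the $\widehat{E}_n$ fixed above depend on $p_0$, this additional choice is compatible with the earlier ones. The proof of Proposition \ref{Forward and reverse work equation} writes the constant in \eqref{opposite work} as $E_1(x_0) - E_0(x_0) + k'$, with $k'$ the constant from \eqref{forward work = back work + sth}; both summands now vanish, so the constant in \eqref{opposite work} equals zero.

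There is no substantive analytical obstacle here---the corollary is really a bookkeeping statement about compatibility of gauge choices. The only thing to verify carefully is that the three successive identifications ($\widehat{E}_n = E_{N-n+1}$ for $1 \leq n \leq N$, $\widehat{E}_0 = E_N$, and $E_0 = E_1$) can be imposed simultaneously; they can, because each fixes the additive constant for a different energy function and the choices do not conflict.
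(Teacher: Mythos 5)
Your proposal is correct and takes essentially the same route as the paper: both exploit the gauge freedom \eqref{ener shift} to fix the additive constants so that $k=(\widehat{E}_N-E_1)-(\widehat{E}_0-E_N)$ vanishes, after noting that detailed balance makes the auxiliary chain of Theorem~\ref{Jarzynski's equality} coincide with the time reversal. The only cosmetic difference is in the second part, where the paper re-gauges $\widehat{E}_N$ to equal $E_0$ (legitimate because $p_0=p_1$ is the stationary distribution of $\widehat{M}_N$) while you re-gauge $E_0$ to equal $E_1$; both choices are equivalent.
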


\begin{proof}
For the constant in \eqref{forward work = back work + sth} we simply choose $\widehat{E}_N=E_1$ and $\widehat{E}_0=E_N$, and for the constant in \eqref{opposite work} we choose \smash{$\widehat{E}_N=E_0$} and $\widehat{E}_0=E_N$, which we can do since $p_0$ is the stationary distribution of $M_1$ and, by  Lemma \ref{energy rela} (see Appendix \ref{appendix}) also that of \smash{$\widehat{M}_N$}.
\end{proof}


\begin{rem}
\label{const in thermo}
Note that choosing the energy functions in Corollary \ref{work coro} is unnecessary whenever both $\vect{X}$ and $\vect{Y}$ are thermodynamic processes. Although energy is defined only up to a constant in thermodynamics, it would make no sense to pick the constants differently when dealing with a system where the same dynamics occur more than once. Thus, there, we have $\widehat{E}_n=E_{N-(n-1)}$ for $1 \leq n \leq N$, $\widehat{E}_0=\widehat{E}_1$ and, in case $p_0$ is a stationary distribution of $M_1$, \smash{$E_1=E_0=\widehat{E}_N$}. In particular, when taking $\vect{Y}$ to be the time reversal of $\vect{X}$ in thermodynamics, we always have $k=0$ in \eqref{forward work = back work + sth} and, in case $p_0$ is a stationary distribution of $M_1$, $k=0$ in \eqref{opposite work}.
\end{rem}

The non-constant term $E_1(x_0)-E_0(x_0)$ in \eqref{forward work = back work + sth}, which remains even when $\vect{X}$ satisfies detailed balance, follows from an asymmetry between $\vect{X}$ and $\vect{Y}$.
In particular, $\vect{X}$ goes from $E_0$ to $E_N$, whereas $\vect{Y}$ goes from $E_N$ to $E_1$, because while $X_0 \sim p_0 \propto e^{-\beta E_0}$ and the stationary distribution of $M_N$ is $p_N \propto e^{-\beta E_N}$, which is also the initial distribution of $\vect{Y}$, the stationary distribution of the final transition matrix  $\widehat{M}_N$ of $\vect{Y}$ is $p_1 \propto e^{-\beta E_1}$ (and not $p_0$). Also, while $\vect{X}$ may begin with a change in the energy function, since $p_1 \neq p_0$ is allowed, $\vect{Y}$ does not, as $p_N$ is both the initial distribution of $\vect{Y}$ and the stationary distribution of $\widehat{M}_1$. An example can be found in Figure \ref{asymmetry work}. This asymmetry is erased if we assume that $p_0$ is the stationary distribution of $M_1$, in which case, for Markov chains $\vect{X}$ that satisfy detailed balance, the work along any realization of $\vect{X}$ has the opposite sign of the work along the reversed realization of $\vect{Y}$. That is, thermodynamic work becomes \emph{odd under time reversal}.

The following theorem contains Crooks' fluctuation theorem as the special case when $\vect{X}$ satisfies detailed balance (see Corollary \ref{physics crooks} below).

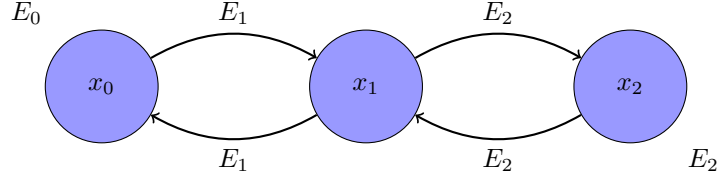
\begin{figure}[!tb]
    \centering
\begin{tikzpicture}[auto,node distance=2.8cm]
    \node[main node] (1) {$x_0$};
    \node[main node] (2) [right = 2cm of 1]  {$x_1$};
     \node[main node] (3) [right = 2cm of 2] {$x_2$};
     
    \node at (-1,0.98) {$E_0$};
     \node at (8,-0.98) {$E_2$};

    \path[draw,thick,->]
    (1) edge [bend left]  node {$E_1$} (2)
    (2) edge [bend left] node {$E_2$} (3)
    (2) edge [bend left]  node {$E_1$} (1)
    (3) edge [bend left]  node {$E_2$} (2);
\end{tikzpicture}
\caption{Simple example showing how the asymmetry between $\vect{X}$ and its time reversal $\vect{Y}$ manifests in thermodynamics. We consider $\vect{x}=(x_0,x_1,x_2)$ a trajectory, $(E_0,E_1,E_2)$ with $E_1(x_0) \not = E_0(x_0)$ the energy functions of $\vect{X}$ (upper line of arrows) and, by Remark \ref{const in thermo}, $(E_2,E_2,E_1)$ the energy functions of $\vect{Y}$ (bottom line of arrows). We have $W_{\vect{X}}(\vect{x})= E_2(x_1)-E_1(x_1) + E_1(x_0) - E_0(x_0)$ and $W_{\vect{Y}}(\vect{x}^R)= E_1(x_1)-E_2(x_1)$. As a result $W_{\vect{Y}}(\vect{x}^R)= -W_{\vect{X}}(\vect{x}) + E_1(x_0) - E_0(x_0)$, in accordance with \eqref{forward work = back work + sth} and Corollary \ref{work coro}. Thus, thermodynamic work is \emph{not} odd under time reversal in general.}
\label{asymmetry work}
\end{figure}

\begin{theo}
\label{crooks fluctuation thm}
If $\vect{X} = (X_n)_{n=0}^N$ is a Markov chain on a finite state space $S$
whose initial distribution $p_0$ has non-zero entries, whose transition matrices $(M_n)_{n=1}^N$ are irreducible,
and where $p_0$ is the stationary distribution of $M_1$, that is $p_1=p_0$, then there exists a Markov chain $\vect{Y}=(Y_n)_{n=0}^N$ and a constant $k\in \mathbb R$ such that
\begin{equation}
\label{eq: crooks}
    \frac{\prob(W_{\vect{X},\vect{E}}=w)}{ \prob(W_{\vect{Y},\vect{\widehat{E}}}=-w+k)} = e^{\beta(w- \Delta F)} \qquad \forall w \in \supp (\prob(W_{\vect{X},\vect{E}})),
\end{equation}
where 
$\vect{E} = (E_n)_{n=0}^N$ and $\vect{\widehat{E}} = (\widehat{E}_n)_{n=0}^N$ are families of energies of $\vect{X}$ and $\vect{Y}$, respectively, and $\supp (\prob(W_{\vect{X},\vect{E}}))$ denotes the support of the probability distribution of $W_{\vect{X},\vect{E}}$, that is, the values that can be taken by $W_{\vect{X},\vect{E}}$ with non-zero probability.
\end{theo}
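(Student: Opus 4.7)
The plan is to assemble Theorem \ref{crooks fluctuation thm} by coupling the pointwise trajectory identity from Proposition \ref{core crooks} with the work relation from Proposition \ref{Forward and reverse work equation}, and then summing over trajectories with fixed work value. Since the hypotheses of Proposition \ref{core crooks} are exactly the hypotheses here, I obtain the existence of a Markov chain $\vect{Y}=(Y_n)_{n=0}^N$ satisfying
\[
\prob(\vect{X}=\vect{x}) \,=\, \prob(\vect{Y}=\vect{x}^R)\, e^{\beta(W_{\vect{X},\vect{E}}(\vect{x}) - \Delta F)} \qquad \forall\, \vect{x}\in S^{N+1}.
\]
Because the additional hypothesis $p_1=p_0$ holds, I can invoke the second statement of Proposition \ref{Forward and reverse work equation} applied to this very $\vect{Y}$, producing a constant $k\in\mathbb{R}$ such that
\[
W_{\vect{Y},\vect{\widehat{E}}}(\vect{x}^R) \,=\, -W_{\vect{X},\vect{E}}(\vect{x}) + k \qquad \forall\, \vect{x}\in S^{N+1},
\]
for suitable families of energies $\vect{E}$ of $\vect{X}$ and $\vect{\widehat{E}}$ of $\vect{Y}$.

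Next I would fix $w\in\supp(\prob(W_{\vect{X},\vect{E}}))$ and sum the first identity over all $\vect{x}\in S^{N+1}$ with $W_{\vect{X},\vect{E}}(\vect{x})=w$. The exponential factor is constant on this set and can be pulled out:
\[
\prob(W_{\vect{X},\vect{E}}=w) \,=\, e^{\beta(w-\Delta F)} \sum_{\vect{x}:\, W_{\vect{X},\vect{E}}(\vect{x})=w} \prob(\vect{Y}=\vect{x}^R).
\]
By the work relation above, the constraint $W_{\vect{X},\vect{E}}(\vect{x})=w$ is equivalent to $W_{\vect{Y},\vect{\widehat{E}}}(\vect{x}^R)=-w+k$. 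Since the map $\vect{x}\mapsto\vect{x}^R$ is a bijection of $S^{N+1}$, the sum over such $\vect{x}^R$ equals $\prob(W_{\vect{Y},\vect{\widehat{E}}}=-w+k)$, yielding the identity
\[
\prob(W_{\vect{X},\vect{E}}=w) \,=\, e^{\beta(w-\Delta F)}\, \prob(W_{\vect{Y},\vect{\widehat{E}}}=-w+k).
\]
Finally, since $w\in\supp(\prob(W_{\vect{X},\vect{E}}))$ the left-hand side is strictly positive, forcing $\prob(W_{\vect{Y},\vect{\widehat{E}}}=-w+k)>0$ as well, so the ratio in \eqref{eq: crooks} is well defined and equals $e^{\beta(w-\Delta F)}$.

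The entire argument is essentially a bookkeeping combination of two prior results, so I do not expect any genuine obstacle; the only point that deserves care is the justification that the denominator is nonzero on the stated support, which is what motivates the restriction of $w$ to $\supp(\prob(W_{\vect{X},\vect{E}}))$ in the statement. Note also that the additional hypothesis $p_1=p_0$ is used solely to eliminate the spurious $E_1(x_0)-E_0(x_0)$ term that would otherwise appear in \eqref{forward work = back work + sth}; without it, the work relation depends on $x_0$ and one cannot collapse the trajectory sum into a single probability of a work value, so this assumption is genuinely essential for the proof strategy.
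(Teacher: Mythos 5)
Your proof is correct and follows essentially the same route as the paper: it combines \eqref{Crooks eq II} from Proposition \ref{core crooks} with the work relation \eqref{opposite work} from Proposition \ref{Forward and reverse work equation} (both instantiated with the same chain $\vect{Y}$ constructed in the proof of Theorem \ref{Jarzynski's equality}) and then sums the trajectory identity over the level sets of $W_{\vect{X},\vect{E}}$. The only divergence is at the end, where you deduce positivity of $\prob(W_{\vect{Y},\vect{\widehat{E}}}=-w+k)$ directly from the established identity and the positivity of the left-hand side, whereas the paper gives a separate argument via irreducibility and the non-vanishing of the stationary distributions; your shortcut is valid and in fact slightly cleaner.
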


\begin{proof}
Let $\vect{Y}=(Y_n)_{n=0}^N$ be the Markov chain defined in the proof of Theorem \ref{Jarzynski's equality}. Note that for any family of energies  $\vect{E}$ of $\vect{X}$ there exists a constant $c$ such that $E_1=E_0+c$ by \eqref{ener shift}, since $p_0$ is the stationary distribution of $M_1$. Given some $ w \in W_{\vect{X},\vect{E}}(S^{N+1})$, we have
\begin{equation*}
\begin{split}
    \prob(W_{\vect{X},\vect{E}}=w) &=
    \sum_{\vect{x} \in W_{\vect{X},\vect{E}}^{-1}(w)} \prob((X_0,..,X_N)=\vect{x}) \\
    &\stackrel{(i)}{=} e^{\beta(w-\Delta F)} \sum_{\vect{x} \in W_{\vect{X},\vect{E}}^{-1}(w)} \prob((Y_0,..,Y_N)=\vect{x}^R) \\
    &\stackrel{(ii)}{=} e^{\beta(w-\Delta F)} \sum_{\vect{x} \in S^{N+1}:\text{ } W_{\vect{Y},\vect{\widehat{E}}}(\vect{x}^R)=-w+k} \prob((Y_0,..,Y_N)=\vect{x}^R) \\
    &= e^{\beta(w-\Delta F)} \ \prob(W_{\vect{Y},\vect{\widehat{E}}}=-w+k),
    \end{split}
\end{equation*}
 where we applied \eqref{Crooks eq II} in $(i)$, and Proposition \ref{Forward and reverse work equation} plus the fact that $E_1=E_0+c$ in $(ii)$. To get \eqref{eq: crooks}, it remains to show that \smash{$\prob(W_{\vect{Y},\vect{\widehat{E}}}=-w+k)>0$} for all $w \in \supp (\prob(W_{\vect{X},\vect{E}}))$. By definition, there exists some $\vect{x} \in S^{N+1}$ such that $W_{\vect{X},\vect{E}}(\vect{x})=w$ and $\prob(\vect{X}=\vect{x})>0$. By Proposition \ref{Forward and reverse work equation}, $W_{\vect{Y},\vect{\widehat{E}}}(\vect{x}^R) = -w + k$. Since $\prob(\vect{X}=\vect{x})>0$
 and the entries of the (unique) stationary distributions of $\vect{X}$ are also non-zero by Lemma \ref{irred prop}, we can use \eqref{def rev} plus the Markov property for both $\vect{X}$ and $\vect{Y}$ to show $\prob(\vect{Y}=\vect{x}^R)>0$, implying $\prob(W_{\vect{Y},\vect{\widehat{E}}}=-w+k)>0$.
\end{proof}


As the special case when each transition matrix of $\vect{X}$ in Theorem \ref{crooks fluctuation thm} satisfies detailed balance, we obtain Crooks' fluctuation theorem modified by the additional assumption of $p_0=p_1$.


\begin{coro}[Crooks' fluctuation theorem for Markov chains]
\label{physics crooks}
If all transition matrices $(M_n)_{n=1}^N$ of the Markov chain $\vect{X}=(X_n)_{n=0}^N$ in Theorem \ref{crooks fluctuation thm} satisfy detailed balance, then \eqref{eq: crooks} holds with $k=0$ and $\vect{Y}$ being the time reversal of $\vect X$, that is the Markov chain with initial distribution $p_N$ and transition matrices $(M_{N-(n-1)})_{n=1}^N$. 
\end{coro}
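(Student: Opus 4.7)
The plan is to derive this corollary directly from Theorem \ref{crooks fluctuation thm} by identifying the Markov chain $\vect{Y}$ constructed there with the time reversal of $\vect{X}$, and by showing that the constant $k$ can be chosen to vanish. The two pieces of machinery I need are Lemma \ref{physical core crooks} (for the identification of $\vect{Y}$) and Corollary \ref{work coro} (for $k=0$); both already take detailed balance as a hypothesis, matching the present setting.

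First I would recall that the Markov chain $\vect{Y}$ in Theorem \ref{crooks fluctuation thm} is exactly the one constructed in the proof of Theorem \ref{Jarzynski's equality}, with initial distribution $p_N$ and transition matrices $(\widehat{M}_n)_{n=1}^N$ given by \eqref{def rev}. Under detailed balance of each $M_n$, \eqref{def rev} collapses to $\widehat{M}_n = M_{N-(n-1)}$, which are precisely the transition matrices of the time reversal of $\vect{X}$. Combined with the initial distribution $p_N$, this shows $\vect{Y}$ coincides with the time reversal of $\vect{X}$---this is the forward direction of Lemma \ref{physical core crooks}.

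Second, the constant $k$ appearing in \eqref{eq: crooks} is inherited from the second statement of Proposition \ref{Forward and reverse work equation}, and Corollary \ref{work coro} already guarantees that $k$ can be taken to be zero provided one chooses $\widehat{E}_N = E_0$ and $\widehat{E}_0 = E_N$. This choice is permissible because, under detailed balance together with $p_0 = p_1$, the stationary distribution of $\widehat{M}_N = M_1$ is $p_1 = p_0$, so $E_0$ is a valid energy function for it, while the initial distribution of $\vect{Y}$ is $p_N$, so $E_N$ is a valid choice for $\widehat{E}_0$. Plugging $k = 0$ and the identification of $\vect{Y}$ with the time reversal into \eqref{eq: crooks} yields the claim.

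The main (minor) obstacle is simply being careful with the non-uniqueness of energy functions: they are only defined up to additive constants, so one must confirm that the simultaneous choices $\widehat{E}_N = E_0$ and $\widehat{E}_0 = E_N$ are admissible within a single family of energies of $\vect{Y}$. Once this compatibility is checked, the corollary is a direct packaging of Lemma \ref{physical core crooks}, Corollary \ref{work coro}, and Theorem \ref{crooks fluctuation thm}, requiring no new computation.
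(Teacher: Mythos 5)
Your proposal is correct and follows essentially the same route as the paper: invoke Theorem \ref{crooks fluctuation thm}, identify $\vect{Y}$ with the time reversal via the collapse of \eqref{def rev} under detailed balance (the forward direction of Lemma \ref{physical core crooks}), and set $k=0$ via Corollary \ref{work coro}. The paper's own proof is just a terser statement of exactly this chain, and your compatibility check on the energy-function choices is harmless since each $\widehat{E}_n$ in a family of energies is constrained only by its own stationary distribution.
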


\begin{proof}
As can be seen from the proof of Theorem \ref{crooks fluctuation thm} and Corollary \ref{work coro}, in case of detailed balance, we can choose $\vect{Y}$ to be the time reversal of $\vect{X}$. Moreover, the origin of the constant $k$ in Theorem \ref{crooks fluctuation thm} is Equation \eqref{opposite work}. By Corollary \ref{work coro}, this constant can be set to zero if the transition matrices of $\vect{X}$ satisfy detailed balance.
\end{proof}

Notice, in most of the literature on Crooks' fluctuation theorem one writes $\mathbb P^F(W)$ for the probability $\mathbb P(W_{\vect{X},\vect{E}})$ of the work along the so-called \emph{forward process} $\vect{X}$ and $\mathbb P^B(W)$ for the probability $\mathbb P(W_{\vect{Y},\vect{\hat E}})$ of the work along the so-called \emph{backward process} $\vect{Y}$ (the time reversal of $\vect{X}$), so that, by Corollary \ref{physics crooks}, under detailed balance, Equation \eqref{eq: crooks} reads
\begin{equation} \label{eq: crooks physics}
\frac{\prob^F(W=w)}{ \prob^B(W=-w)} = e^{\beta(w-\Delta F)} \, .
\end{equation}


The condition that $p_0$ is the stationary distribution of $M_1$ is not necessary in Crooks' original work \cite{crooks1998nonequilibrium,crooks1999entropy,crooks1999excursions,crooks2000path}.
Nonetheless, it is fundamental in our approach:
Crooks' fluctuation theorem can even be false for every work value if $p_0$ is not the stationary distribution of $M_1$, as we show in Proposition \ref{counterexample}.

\begin{prop}
\label{counterexample}
If $p_0$ is not the stationary distribution of $M_1$, then there exist Markov chains where Crooks' fluctuation theorem is false everywhere, despite the other assumptions in Theorem \ref{crooks fluctuation thm} and Corollary \ref{physics crooks} being fulfilled.
\end{prop}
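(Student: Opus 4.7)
The plan is to construct a minimal explicit counterexample on the two-element state space $S=\{a,b\}$ with trajectory length $N=2$. I would take $p_0$ to be the uniform distribution on $S$, choose any strictly non-uniform distribution $p_1$ on $S$, and set both transition matrices equal, $M_1=M_2$, to a single irreducible stochastic matrix satisfying detailed balance with respect to $p_1$ (a Metropolis-type kernel does the job). Then $p_2=p_1$ and every hypothesis of Theorem \ref{crooks fluctuation thm} and Corollary \ref{physics crooks} holds except $p_0=p_1$.

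With this setup, the forward and backward work distributions have essentially disjoint supports. Since $p_2=p_1$ one may take $E_2=E_1$, whereupon \eqref{defi work} collapses to $W_{\vect X,\vect E}(\vect x)=E_1(x_0)-E_0(x_0)$; this depends only on $x_0$, takes two distinct values $w_a,w_b$ (distinctness follows from $p_0\neq p_1$ via $E_1-E_0=\beta^{-1}\log(p_0/p_1)+\text{const}$), each with probability $1/2$, and the additive constants in the energies can be chosen (e.g.\ $E_0\equiv 0$ and $E_1=-\beta^{-1}\log p_1$) so that both $w_a$ and $w_b$ are nonzero. For the chain $\vect Y$ of Theorem \ref{crooks fluctuation thm}, which under detailed balance coincides with the time reversal of $\vect X$, the thermodynamic convention of Remark \ref{const in thermo} gives $\widehat E_0=\widehat E_1=\widehat E_2=E_1$, so every telescoping term in $W_{\vect Y,\vect{\widehat E}}$ cancels and $W_{\vect Y,\vect{\widehat E}}\equiv 0$. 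Hence $\supp(\prob(W_{\vect Y,\vect{\widehat E}}))=\{0\}$ while $\supp(\prob(W_{\vect X,\vect E}))=\{w_a,w_b\}$ contains no zero.

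The conclusion is then immediate: for every $w\in\supp(\prob(W_{\vect X,\vect E}))$ one has $-w\notin\supp(\prob(W_{\vect Y,\vect{\widehat E}}))$, so the left-hand side of \eqref{eq: crooks} is $+\infty$ while the right-hand side $e^{\beta(w-\Delta F)}$ is finite and strictly positive; Crooks' identity therefore fails at every admissible $w$. The only mildly delicate point, which I would address explicitly, is to rule out the possibility that a different choice of energies, or of the shift constant $k$ from Theorem \ref{crooks fluctuation thm}, might repair the equality: since $\supp(\prob(W_{\vect Y,\vect{\widehat E}}))$ remains a single point under any energy choice, at most one value of $w$ can satisfy $-w+k\in\supp(\prob(W_{\vect Y,\vect{\widehat E}}))$, so the equality can hold for at most one of the two distinct values $w_a,w_b$ and must fail at the other. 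This cardinality obstruction is the main (and only real) obstacle, and once it is in place the counterexample is complete.
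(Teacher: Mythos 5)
Your construction is correct and rests on the same core mechanism as the paper's: arrange things so that the backward chain starts in, and never leaves, a single equilibrium, forcing $W_{\vect{Y},\vect{\widehat{E}}}\equiv 0$, while the forward work is a non-degenerate function of $x_0$ whose values miss the backward support, so the denominator in \eqref{eq: crooks physics} vanishes at every admissible $w$. The differences are cosmetic and one substantive point. Cosmetically, the paper uses $N=1$ on a three-element space where you use $N=2$ on two states (your extra step contributes nothing since $E_2=E_1$; $N=1$ would do). Substantively, the paper chooses $p_1$ to be a \emph{permutation} of $p_0$, which forces $Z_0=Z_1$ and hence $\Delta F=0$ and guarantees one work value $w_A=0$ at which the ratio in \eqref{eq: crooks physics} \emph{is} well-defined ($\prob^B(W=0)=1$) and yet the equality still fails ($a<1$). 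Your example only ever produces the ``$0$ in the denominator'' failure mode, which satisfies the proposition as stated but is a strictly less informative counterexample: it leaves open the impression that the failure is purely a support mismatch rather than a genuine quantitative violation. Your closing robustness remark is a nice addition the paper does not make, but note that as argued it only shows that no choice of energy constants can make the identity hold at \emph{both} of $w_a,w_b$ (i.e., it fails somewhere for every choice), not that it fails everywhere for every choice; this does not affect the proposition, since exhibiting one Markov chain with one family of energies for which the identity fails at every $w$ in the forward support is all that is required, and your explicit choice $E_0\equiv 0$, $E_1=-\beta^{-1}\log p_1$ delivers that.
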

\begin{proof}
We consider a state space with three components $S=\{A,B,C\}$, a Markov chain with two steps $\vect{X}=(X_0,X_1)$ and $a,b,c >0$ where $b<c$ and $a+b+c=1$. We take $E_0$ as energy function associated to $X_0$ where $E_0(A)=\log \frac{1}{a}$, $E_0(B)=\log \frac{1}{b}$ and $E_0(C)=\log \frac{1}{c}$ and $E_1$ as energy function associated to $X_1$ where $E_1(A)=E_0(A)$, $E_1(B)=E_0(C)$ and $E_1(C)=E_0(B)$. Taking $\beta =1$, we get both free energies $F_1$ and $F_0$ equal to one. Notice we have $p_0=(a,b,c)$, with non-zero entries, and $p_1=(a,c,b)$. Notice, also, $W_{\vect{X}}(S^2) = \{w_A,w_B,w_C\}$, where $w_C \coloneqq E_1(C)-E_0(C)>0$, $w_B \coloneqq E_1(B)-E_0(B)<0$ and $w_A \coloneqq E_1(A)-E_0(A)=0$. We fix $\widehat{E}_1=\widehat{E}_0=E_1$. We can easily see that \eqref{eq: crooks physics} is not defined for $w=w_B,w_C$, and that it is false, although defined, for $w=w_A$. We have $W_{\vect{Y}}(\vect{x})=0$ $\forall \vect{x} \in S^2$, since $p_1$ is both the starting distribution of time reversal of $\vect{X}$ and the stationary distribution of its only transition matrix. Thus, we have $\prob^F(W=w_C)=\prob(X_0=C)=c>0$ and $\prob^B(W=-w_C)=0$ which means \eqref{eq: crooks physics} is not defined at $W=w_C$. We get, analogously, it is not defined for $W=w_B$. For $W=w_A$, we have $\prob^B(W=-w_A)=1$ and 
\[
\prob^F(W=w_A)=\prob(X_0=A)= a < 1= e^{\beta w_A}
\]
which means \eqref{eq: crooks physics} is defined and false there. Although the argument is independent of the transition matrix for $\vect{X}$, fix
\begin{equation*}
M_1 \coloneqq 
    \begin{pmatrix}
a & a & a\\
c & c & c \\
b & b & b
\end{pmatrix}
\end{equation*}
for completeness, since it has non-zero entries and fulfills detailed balance with respect to $p_1$.
\end{proof}

\section{Discussion: Application to decision-making}
\label{sec:decision}

The bridge connecting thermodynamics and decision-making is an optimization principle directly inspired by the \emph{maximum entropy principle} \cite{jaynes1957information,jaynes2003probability,wolpert2006information,genewein2015bounded}. In particular, given a finite set of possible choices $S$, the optimal behaviour is given by the distribution $p \in \mathbb P_S$ that optimally trades utility and uncertainty according to the following optimization principle:
\begin{equation}
	\label{bounded rat prin II}
	p = \arg\max_{ q \in \mathbb P_S} \Big\{H(q)\Big| \mathbb E_q[U] \geq U_0\Big\},
\end{equation}
where $\mathbb P_S$ is the set of probability distributions over $S$, $H$ is the Shannon entropy, $\mathbb E_q[~\cdot~]$ denotes the expected value over $q \in \mathbb P_S$ and $U:S \to \mathbb R$ is a \emph{utility} function, that is, a function that assigns larger values to options in $S$ that are preferred by the decision-maker. Note that the main difference between \eqref{bounded rat prin II} and the maximum entropy principle is the the substitution of an energy function $E: S \to \mathbb R$ by a utility $U$, which behaves as a negative energy function $U=-E$ (in the sense that it is the \emph{force} that opposes uncertainty). As a result of their similarity, both principles yield the same result, namely, the Boltzmann distribution 
\begin{equation*}
	p(x) = \frac{1}{Z} e^{\beta U(x)}=\frac{1}{Z} e^{-\beta E(x)} \text{ } \forall x \in S,
\end{equation*}
where $\beta$ is a trade-off parameter between uncertainty and utility/energy, and $Z$ is a normalization constant.

The analogy between thermodynamics and decision-making can be taken further by considering not only their optimal distribution but how they transition between different distributions in the path towards the optimal one. Here, the notion of uncertainty is useful again, although this time it is relative to the optimal distribution $p$. More specifically, we can think of the transitions the decision-maker undergoes as being driven by the reduction of uncertainty with respect to the optimal distribution $p$, which can be modelled by the dual of $p$-majorization \cite{joe1990majorization}. In this approach, given $q,q' \in \mathbb P_S$, the decision-maker transitions from $q$ to $q'$, which we denote by $q \preceq_p q'$, if $q'$ is \emph{closer} to $p$ than $q$ (see \cite{joe1990majorization} for a rigorous definition of the dual of $\preceq_p$ and, hence, of \emph{closer}). For us, the important fact about $\preceq_p$ is that, as it turns out \cite[Theorem 2]{gottwald2019bounded}, the transitions that are allowed by $\preceq_p$ are precisely the ones that result from applying a transition matrix that has $p$ as stationary distribution. More precisely,
\begin{equation}
	\label{evo to p}
	q \preceq_p q' \iff q'= M_p q,
\end{equation}
where $M_p$ is a matrix whose rows are normalized and fulfills $M_p p = p$, that is, a \emph{stochastic} matrix for which $p \in \mathbb P_S$ is a stationary distribution \cite{levin2017markov}. Importantly, the transition matrix assumption is common in the study of both thermodynamic and decision-making systems \cite{crooks1998nonequilibrium,crooks2000path}.

In our current study, the situation we have in mind is that of a decision-maker that has to take a sequence of decisions under varying environmental conditions. We model this as a stochastic process which behaves like a Markov chain and introduce, starting from a Markov chain, the thermodynamic tools we use to describe it: energy, partition function, free energy, work, heat and dissipated work. The behavior of the decision-maker then corresponds to a decision vector $(x_0,x_1,\dots,x_n)$ collected over $n$ potentially different environments. 
In this most general decision-making scenario, where the environment is changing over time, the optimal distribution $p$ is changing as well. Thus, we can regard the  decision-making process as a sequence of transition matrices $M_p$, where each $M_p$ corresponds to a particular environment with optimal response $p$. We could imagine, for example, a gradient descent learner that would converge to $p$ for any given environment, presuming we allow for sufficient gradient update steps. Otherwise the gradient learner, or any other optimization-based decision-making agent (eg. following a Metropolis-Hastings optimization scheme), would lag behind the environmental changes and the environment would outpace the learner. In this general decision-making scenario we can then study the relation between the optimal behavior and the non-optimal one by fluctuation theorems \cite{seifert2012stochastic,jarzynski2011equalities,jarzynskia2008nonequilibrium} like Jarzynski's equality \cite{jarzynski1997equilibrium,jarzynski2004nonequilibrium,jarzynski1997nonequilibrium} and Crooks' fluctuation theorem \cite{crooks1999entropy,crooks1998nonequilibrium}. 
While we focus on these two fluctuation theorems in our current study, similar arguments may be suitable to transfer other fluctuation theorems that have been considered in the thermodynamic literature (see for example \cite{seifert2012stochastic}) to a decision-making scenario.

\paragraph{Jarzynski's equality in decision-making.} Although the strong requirements in Lemma \ref{physical core crooks} have been used in \cite{crooks1998nonequilibrium} to derive Jarzynski's equality through \eqref{physics Crooks eq II} and were assumed in the only approach we know to it in decision-making \cite{grau2018non}, weaker assumptions which do not involve the time reversal of $\vect{X}$ are sufficient (cf. Theorem \ref{Jarzynski's equality}). The same properties have been used to derive Jarzynski's equality following a different method in \cite{jarzynski1997equilibrium}.

	\paragraph{Example application: Jarzynski's theorem.}
	\label{appli jarz}
	In a decision scenario, the energy becomes a loss function that a decision-maker is trying to minimize. If this loss function changes over time, we can conceptually distinguish changes in loss that are induced externally by changes in the environment (e.g. given data), from changes in loss due to internal adaptation when a learning system changes its parameter settings. The externally induced changes in loss correspond to the physical concept of work and drive the adaptation process. Hence, we can consider the decision-theoretic equivalent of physical work as a driving signal: the (negative) surprise experienced by the decision-maker, given that it adds the (negative) surprise it experiences at each step (which can be quantified by the difference in energy/utility evaluated at the decision-maker's state when the environment changes) \cite{grau2018non,hack2022thermodynamic}. With this in mind, we can use Jarzynski's equality to obtain a bound on the  decision-maker's expected surprise. In particular, by applying Jensens' inequality on Jarzynski's equality, one obtains  
	\begin{equation}
		\label{second law}
		\Delta F \leq \mathbb E\big[ W(\vect{X}) \big].
	\end{equation}
	(Note this is a version of the second law of thermodynamics \cite{jarzynski2011equalities}.)
	Hence, \eqref{second law} provides a bound on the expected surprise. While a similar bound has been previously pointed out for decision-making systems \cite{grau2018non}, here we re-derive it under a novel energy protocol and with weaker assumptions regarding time-reversibility.

\paragraph{Crooks' theorem in decision-making.} Even though the assumption that $p_0$ must be the stationary distribution of $M_1$ seems to restrict the applicability of Crooks' fluctuation theorem in our decision-theoretic setup when compared to the usual thermodynamics one (see Corollary \ref{physics crooks} and Proposition \ref{counterexample}), it is actually not an issue from an experimental point of view when the Markov chains correspond to thermodynamic processes. This is the case because of the way one is able to sample from a Boltzmann distribution given a thermodynamic system. One of the assumptions in Corollary \ref{physics crooks} is that $X_0$ should follow such a distribution for $E_0$. For this to be fulfilled, one needs to wait until the system relaxes to such a state. Because of that, one can think of any trajectory as having an additional point which was also sampled from the Boltzmann distribution for $E_0$. Thus, the assumption that $p_0$ is the equilibrium distribution of $M_1$ is always fulfilled and 
the experimental range of validity of Crooks' fluctuation theorem in our setup remains equal to the one in non-equilibrium thermodynamics \cite{crooks1998nonequilibrium,crooks2000path}.
In particular, the new constraint is fulfilled in previous experimental setups supporting the theorem (see, for example, \cite{collin2005verification} or \cite{saira2012test}).

	\paragraph{Example application: Crooks' theorem.}
	\label{appli crooks}
	Hysteresis is a well-known effect that takes place in some physical systems and refers to the difference in the system's behaviour when interacting with a series of environments compared to its response when facing the same conditions in reversed order \cite{jarzynski2011equalities}. The same idea also applies to decision-making systems. In fact, hysteresis has been reported in both simulations of decision-making systems \cite{grau2018non} as well as in biological decision-makers recorded experimentally \cite{turnham2012facilitation,hack2022thermodynamic}.  Given that it refers to the difference between decisions when the order in which the environments are presented is reversed, \eqref{physics Crooks eq II} and \eqref{eq: crooks physics} constitute quantitative measures of hysteresis. In particular, \cite{hack2022thermodynamic} has used this measure successfully to quantify hysteresis in human sensorimotor adaptation, where human learners had to solve a simple motor coordination task in a dynamic environment with changing visuomotor mappings. While a simple Markov model proved adequate to model sensorimotor adaptation, it should be noted that more complex learning scenarios involving long-term memory and abstraction would not be captured by such a simple model.

\paragraph{Detailed balance.}
Detailed balance is not required neither for Jarzynski's equality (Theorem \ref{Jarzynski's equality}) nor for the more general form of Crooks' fluctuation theorem we presented in Theorem \ref{crooks fluctuation thm}. It is, however, required in order to choose $\vect{Y}$ to be the time reversal of $\vect{X}$, which leads to Crooks' fluctuation theorem in Corollary \ref{physics crooks}. 

While the definition of detailed balance \eqref{det balance} we adopted here is standard in the Markov chain literature, there is some ambiguity regarding its use in thermodynamics, where it has, at least, two more meanings. It is used both for the weaker condition that the Boltzmann distribution $p_n$ is a stationary distribution of $M_n$ for $1 \leq n \leq N$ \cite{jarzynski1997equilibrium} and as a synonym of microscopic reversibility \cite{crooks2000path,crooks1998nonequilibrium}. Although we have shown microscopic reversibility and detailed balance are indeed equivalent under some conditions (see Lemma \ref{physical core crooks}), we have followed its definition in \cite{crooks2000path}, which is not the only one in the literature (see \cite{crooks2011thermodynamic,tolman1925principle,cohen2004note}).

Notice, what is called a stationary distribution in the literature on Markov chains is referred to as a \emph{nonequilibirum steady state} in thermodynamics \cite{zhang2012stochastic}. In order for it to be an \emph{equilibrium state}, it needs to fulfill detailed balance \eqref{det balance} with respect to the the transition matrix in question. Notice, also, detailed balance is not fulfilled is several application of non-equilibrium thermodynamics throughout physics \cite{tang2015work} and biology \cite{battle2016broken}.


\paragraph{Continuous-time Markov chains.} Notice, in case we have a continuous-time Markov chain, work becomes an integral where the integrand for $\vect{X}$ at $\vect{x}$ and the one for $\vect{Y}$ at $\vect{x}^R$ differ, aside from the sign, in a single point. Thus, work is odd under time reversal and 
the assumption that $p_0=p_1$ can be dropped in both Theorem \ref{crooks fluctuation thm} and Corollary \ref{physics crooks}. However, the technical tools required to show Crooks' fluctuation theorem or Jarzynski's equality are technically more involved in the continuous-time case, as one can see in \cite{ge2007generalized}. 

	\paragraph{Continuous state space.}
	In case the state space is continuous, the results can be derived in a similar fashion. What we ought to notice is that, in this scenario, the role of the transition matrices is played by the densities of the Markov kernels (see for example \cite{chib1995understanding}). These densities allow us to write conditions like detailed balance analogously to how we do it in the discrete case. In the case of Jarzynski's equality for a Markov chain on a continuous state space $\vect{X}$, one can see that the result follows like the one with a discrete state space. To convince ourselves, the only thing to take into account is the substitution of the sum in the expected value by the integral and that of the probability distribution by the density of the Markov kernel. Then, following the proof of Theorem \ref{Jarzynski's equality}, we can define a stochastic process $\vect{Y}$ whose density Markov kernels are defined through the stationary distributions and density Markov kernels of $\vect{X}$, in analogy to how we defined them in Theorem \ref{Jarzynski's equality}. The rest follow exactly in the same fashion. Crooks' fluctuation theorem requires a longer explanation but, essentially, follows from the same considerations.

\section{Conclusion}

In this paper, we have investigated the potential of thermodynamic fluctuation theorems to serve probabilistic laws of decision-making. In particular, we have derived two thermodynamic fluctuation theorems, Jarzynski's equality and Crooks' theorem, in the context of general Markov chains $\vect{X}$. We have started by defining several thermodynamic concepts for Markov chains and discussing how these definitions do not correspond in general to the ones used in thermodynamics. Right after, we have derived Jarzynski's equality in Theorem \ref{Jarzynski's equality} without any assumption involving the time reversal of $\vect{X}$. Thus, we have improved on the previous attempt to derive it in the context of decision-making \cite{grau2018non}, which was based on the physical conventions pioneered in \cite{crooks1998nonequilibrium}. Regarding Crooks' fluctuation theorem, we have shown in Theorem \ref{crooks fluctuation thm}, Corollary \ref{physics crooks}, and Proposition \ref{counterexample}, that, in our decision-theoretic setup, it requires the additional assumption that the initial distribution of $\vect{X}$ must be the stationary distribution of its first transition matrix. This results from the fact that our notion of work is inherent to Markov chains, which contrasts with the definition used in previous derivations \cite{crooks1998nonequilibrium,crooks2000path,crooks1999excursions}, where the work along the forward and backward paths is calculated in a different way for physical reasons. Instead, we calculate the work along the two paths in the same way, in order for the quantities involved in the final result to have an interpretation that is relevant for decision-making.



\newpage

\begin{appendix}
\section{Appendix}
\label{appendix}

Here, we show two results relating a Markov chain $\vect{X}$ to the Markov chain $\vect{Y}$ defined from $\vect{X}$ as in Theorem \ref{Jarzynski's equality}. First, in Lemma \ref{X irred implies Y irred}, we relate the irreducibility of transition matrices for $\vect{X}$ with that of $\vect{Y}$.

\begin{lem}
\label{X irred implies Y irred}
If $\vect{X}=(X_n)_{n=0}^N$ is a Markov chain on a finite state space $S$ whose transition matrices are irreducible, then the transition matrices of the Markov chain $\vect{Y}=(Y_n)_{n=0}^N$ defined in Theorem \ref{Jarzynski's equality} are irreducible. 
\end{lem}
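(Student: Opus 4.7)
My plan is to observe that each transition matrix $\widehat{M}_{n+1}$ of $\vect{Y}$ is essentially a ``reweighted transpose'' of the corresponding transition matrix $M_{N-n}$ of $\vect{X}$, and that this relationship is preserved under taking powers. Since $M_{N-n}$ is irreducible, its stationary distribution $p_{N-n}$ has strictly positive entries by Lemma \ref{irred prop}, so all the reweighting factors $p_{N-n}(x)/p_{N-n}(y)$ are strictly positive. Thus positivity of entries of powers of $\widehat{M}_{n+1}$ will be entirely controlled by positivity of entries of powers of $M_{N-n}$.

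Concretely, I would first record the base case: directly from \eqref{def rev}, $(\widehat{M}_{n+1})_{xy} > 0$ if and only if $(M_{N-n})_{yx} > 0$, because $p_{N-n}$ has only positive entries. Next, the key computational step is to show by induction on $m \geq 1$ the identity
\begin{equation*}
(\widehat{M}_{n+1}^{\,m})_{xy} \;=\; \frac{p_{N-n}(x)}{p_{N-n}(y)}\, (M_{N-n}^{\,m})_{yx} \qquad \forall x,y \in S.
\end{equation*}
The inductive step is a one-line calculation: expanding $(\widehat{M}_{n+1}^{\,m+1})_{xy} = \sum_z (\widehat{M}_{n+1})_{xz} (\widehat{M}_{n+1}^{\,m})_{zy}$ using \eqref{def rev} and the inductive hypothesis, the factors $p_{N-n}(z)$ telescope and leave $(p_{N-n}(x)/p_{N-n}(y)) \sum_z (M_{N-n})_{zx} (M_{N-n}^{\,m})_{yz}$, which is $(p_{N-n}(x)/p_{N-n}(y))(M_{N-n}^{\,m+1})_{yx}$.

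Finally, given this identity, irreducibility of $\widehat{M}_{n+1}$ is immediate: for any $x,y \in S$, irreducibility of $M_{N-n}$ yields some $m \geq 1$ with $(M_{N-n}^{\,m})_{yx} > 0$; since $p_{N-n}(x), p_{N-n}(y) > 0$, the identity above gives $(\widehat{M}_{n+1}^{\,m})_{xy} > 0$, as required. There is no real obstacle here; the only point requiring care is swapping the roles of $x$ and $y$ correctly (since $\widehat{M}_{n+1}$ is built from $M_{N-n}$ transposed and reweighted), which is precisely what the factor $p_{N-n}(x)/p_{N-n}(y)$ tracks.
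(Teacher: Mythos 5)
Your proof is correct and follows essentially the same route as the paper's: both arguments rest on the observation that \eqref{def rev} expresses $\widehat{M}_{n+1}$ as the transpose of $M_{N-n}$ reweighted by the strictly positive factors $p_{N-n}(x)/p_{N-n}(y)$ (positive by Lemma \ref{irred prop}), with the reweighting telescoping along paths. You package this as a matrix-power identity proved by induction, whereas the paper traces an explicit positive-probability path and reverses it, but the underlying telescoping computation is identical.
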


\begin{proof}
We denote by $(M_n)_{n=1}^N$ the transition matrices of $\vect{X}$ and by $(\widehat{M}_n)_{n=1}^N$ the ones of $\vect{Y}$. Consider $x,y \in S$ and $\widehat{M}_n$ for some $1 \leq n \leq N$. Since $M_{N+1-n}$ is irreducible, there exist $m \in \mathbb{N}$ and $x_0,x_1,..,x_m$ where $x_0=y$ and $x_m=x$ such that 
\begin{equation*}
\label{irreducible}
\prob(X_m=x_m,..,X_1=x_1|X_0=x_0) = (M_{N+1-n})_{x_m,x_{m-1}}..(M_{N+1-n})_{x_1,x_0} >0. 
\end{equation*}
Hence, we have
\begin{equation*}
\begin{split}
    &\prob(Y_m=x_0,..,Y_1=x_{m-1}|Y_0=x_m) \stackrel{(i)}{=} (\widehat{M}_{n})_{x_0,x_{1}}..(\widehat{M}_{n})_{x_{m-1},x_m} \\
    &\stackrel{(ii)}{=} (M_{N+1-n})_{x_m,x_{m-1}}..(M_{N+1-n})_{x_1,x_0} \frac{p_{N+1-n}(x_{m-1})}{p_{N+1-n}(x_m)}..\frac{p_{N+1-n}(x_0)}{p_{N+1-n}(x_1)} \\
    &\stackrel{(iii)}{>}0
    \end{split}
\end{equation*}
where we applied the Markov property of $\vect{Y}$ in $(i)$, \eqref{def rev} in $(ii)$ and \eqref{irreducible} plus the fact, as $M_{N+1-n}$ is irreducible, $p_{N+1-n}(x)>0$ $\forall x \in S$ by Lemma \ref{irred prop} in $(iii)$.
\end{proof}

The connection between the irreducibility of the transition matrices of $\vect{X}$ and $\vect{Y}$ in Lemma \ref{X irred implies Y irred}
results in a relation between their energy functions, which we prove in Lemma \ref{energy rela}.

\begin{lem}
\label{energy rela}
If $\vect{X} = (X_n)_{n=0}^N$ is a Markov chain on a finite state space $S$ whose initial distribution $p_0$ has non-zero entries and whose transition matrices $(M_n)_{n=1}^N$ are irreducible, $\vect{Y}=(Y_n)_{n=0}^N$ is the Markov chain defined in Theorem \ref{Jarzynski's equality}, $\vect{E}= (E_n)_{n=0}^N$ is a family of energy functions of $\vect{X}$ and $\vect{\widehat{E}} = (\widehat{E}_n)_{n=0}^N$ is a family of energy functions of $\vect{Y}$, then there exist constants $(k_n)_{n=0}^N$ such that $\widehat{E}_{0} = E_{N} + k_{0}$ and 
\begin{equation}
\label{const diff}
    \widehat{E}_{n+1} = E_{N-n} + k_{n+1}
\end{equation} 
for $0 \leq n \leq N-1$.
\end{lem}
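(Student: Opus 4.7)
The plan is to reduce the statement to a routine application of the shift-freedom of energy functions expressed in \eqref{ener shift}, once we have identified what the stationary distributions of the transition matrices $\widehat{M}_n$ of $\vect{Y}$ actually are. Since $\vect{\widehat E}$ is assumed to be a family of energies of $\vect{Y}$, each $\widehat{E}_{n+1}$ is, by definition, an energy function of the unique stationary distribution of $\widehat{M}_{n+1}$, and $\widehat{E}_0$ is an energy function of the initial distribution of $\vect{Y}$. So everything reduces to matching these distributions to $p_{N-n}$ and $p_N$, respectively.

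First I would invoke Lemma \ref{X irred implies Y irred} to know that each $\widehat{M}_{n+1}$ is irreducible, hence has a unique stationary distribution with strictly positive entries by Lemma \ref{irred prop}. This ensures that a family of energies $\vect{\widehat E}$ of $\vect{Y}$ is well-defined and that \eqref{ener shift} will apply. For the initial step, the initial distribution of $\vect{Y}$ is $p_N$ by construction in the proof of Theorem \ref{Jarzynski's equality}, and $E_N$ is an energy of $p_N$ because $\vect{E}$ is a family of energies of $\vect{X}$. Since $\widehat{E}_0$ is also an energy function of $p_N$ and $p_N$ has non-zero entries, \eqref{ener shift} yields $\widehat{E}_0=E_N+k_0$ for some $k_0\in\mathbb R$.

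For the remaining indices, the key verification is that $p_{N-n}$ is stationary for $\widehat{M}_{n+1}$. This is a direct computation from the definition \eqref{def rev}:
\begin{equation*}
(\widehat{M}_{n+1}\,p_{N-n})(x)
=\sum_{y\in S}\frac{p_{N-n}(x)}{p_{N-n}(y)}\,(M_{N-n})_{yx}\,p_{N-n}(y)
=p_{N-n}(x)\sum_{y\in S}(M_{N-n})_{yx}
=p_{N-n}(x),
\end{equation*}
where the last equality uses that $M_{N-n}$ is stochastic, so its columns (i.e.\ summing the row index $y$ with column index $x$ fixed) sum to one. By uniqueness of the stationary distribution for the irreducible $\widehat{M}_{n+1}$, this is \emph{the} stationary distribution. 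Hence $\widehat{E}_{n+1}$ and $E_{N-n}$ are both energy functions of the same strictly positive distribution $p_{N-n}$, and another application of \eqref{ener shift} gives $\widehat{E}_{n+1}=E_{N-n}+k_{n+1}$ for some $k_{n+1}\in\mathbb R$.

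There is no real obstacle here: the substantive content is the bookkeeping check that the construction \eqref{def rev} sends $p_{N-n}$ to itself, which is essentially the same calculation already used to show $\widehat{M}_n$ is stochastic in the proof of Theorem \ref{Jarzynski's equality}. The only subtle point is making sure one is allowed to invoke \eqref{ener shift}, which requires strict positivity of the relevant distributions; this is guaranteed by Lemma \ref{irred prop} together with Lemma \ref{X irred implies Y irred}.
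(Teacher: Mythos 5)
Your proof is correct and follows essentially the same route as the paper's: establish irreducibility of the $\widehat{M}_n$ via Lemma~\ref{X irred implies Y irred} and Lemma~\ref{irred prop}, match $\widehat{E}_0$ to $E_N$ through the initial distribution $p_N$, and verify by the same direct computation from \eqref{def rev} that $p_{N-n}$ is the unique stationary distribution of $\widehat{M}_{n+1}$ before applying \eqref{ener shift}. No gaps.
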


\begin{proof}
Since all transition matrices of $\vect{X}$ are irreducible and the same holds for $\vect{Y}$ by Lemma \ref{X irred implies Y irred}, we can apply Lemma \ref{irred prop} to $\vect{Y}$ and get each of its transition matrices has a unique stationary distribution composed of non-zero entries. Thus, energy is well-defined, up to a constant, for $\vect{Y}$. Since $Y_0$ follows $p_N$ by definition, we have automatically there exists a constant $k_0$ such that $\widehat{E}_{0} = E_{N} + k_{0}$. To get \eqref{const diff}, notice we have $\widehat{M}_{n+1} p_{N-n} = p_{N-n}$ for $0 \leq n \leq N-1$ by definition of $(\widehat{M}_n)_{n=1}^N$:
\begin{equation}
\label{equi for reverse}
\begin{split}
    (\widehat{M}_{n+1} p_{N-n})(y) &= \sum_{x \in S} (\widehat{M}_{n+1})_{y x} p_{N-n}(x) = \sum_{x \in S} (M_{N-n})_{x y} p_{N-n}(y)\\ &=p_{N-n}(y),
    \end{split}
\end{equation}
where we applied \eqref{def rev} in the second equality and the fact $M_{N-n}$ is a stochastic matrix in the third. Thus, for $0 \leq n \leq N-1$, $p_{N-n}$ is the unique stationary distribution of $\widehat{M}_{n+1}$ and there exists a constant $k_{n+1}$ such that $\widehat{E}_{n+1} = E_{N-n} + k_{n+1}$.
\end{proof}

\end{appendix}

\newpage

\bibliographystyle{plain}
\bibliography{v5.bib}

\end{document}